\newcounter{rowno}
\newcommand{\setup}{\textsf{\textup{Setup}}}
\newcommand{\kg}{\textsf{\textup{KG}}}
\newcommand{\sks}{sk_\textup{s}}
\newcommand{\pks}{pk_\textup{s}}
\newcommand{\Z}{\mathbb{Z}_p^*}
 \newcommand{\G}{\mathbb{G}}
\newcommand{\RS}{\mathcal{RS}}
\newcommand{\A}{\mathcal{A}}
\newcommand{\B}{\mathcal{B}}
\newcommand{\M}{\mathcal{M}}
\newcommand{\negl}{\textup{\textsf{negl}}(\lambda)}
\newcommand{\pp}{\textup{\textsf{pp}}}
\newcommand{\spk}{\textup{\textsf{spk}}}
\newcommand{\ssk}{\textup{\textsf{ssk}}}
\newcommand{\vpk}{\textup{\textsf{vpk}}}
\newcommand{\vsk}{\textup{\textsf{vsk}}}
\newcommand{\skg}{\textup{\textsf{SignKG}}}
\newcommand{\vkg}{\textup{\textsf{VerKG}}}
\newcommand{\sign}{\textup{\textsf{Sign}}}
\newcommand{\verify}{\textup{\textsf{Verify}}}
\newcommand{\forgeds}{\textup{\textsf{FgeDS}}}
\newcommand{\forgeas}{\textup{\textsf{FgeAS}}}
\newcommand{\scor}{S_{\textup{cor}}}
\newcommand{\mdvs}{\textup{\textsf{MDVS}}}
\newcommand{\orawk}{\mathcal{O}_{\text{\emph{WK}}}}
\newcommand{\oradk}{\mathcal{O}_{\text{\emph{DK}}}}
\newcommand{\orawpk}{\mathcal{O}_{\text{\emph{WPK}}}}
\newcommand{\oradpk}{\mathcal{O}_{\text{\emph{DPK}}}}
\newcommand{\orawatm}{\mathcal{O}_{\text{\emph{W}}}}
\newcommand{\oradet}{\mathcal{O}_{\text{\emph{D}}}}
\newcommand{\gamecons}{\textup{\textbf{G}}^{\textup{cons}}_{\mddw,\A}(\lambda)}
\newcommand{\advcons}{\textup{\textbf{Adv}}^{\textup{cons}}_{\mddw,\A}(\lambda)}
\newcommand{\gameunf}{\textup{\textbf{G}}^{\textup{sound}}_{\mddw,\A}(\lambda)}
\newcommand{\advunf}{\textup{\textbf{Adv}}^{\textup{sound}}_{\mddw,\A}(\lambda)}
\newcommand{\poly}{\textsf{poly}(\lambda)}
\newcommand{\tg}{\mathcal{T}}
\newcommand{\model}{\textsf{\upshape{Model}}}
\newcommand{\genmodel}{\textsf{\upshape{GenModel}}}
\newcommand{\decode}{\textsf{\upshape{Decode}}}
\newtheorem{assumption}{Assumption}
\newcommand{\mddw}{\textup{\textsf{MDDW}}}
\newcommand{\ddw}{\textup{\textsf{DDW}}}
\newcommand{\wpk}{\textup{\textsf{wpk}}}
\newcommand{\wsk}{\textup{\textsf{wsk}}}
\newcommand{\dpk}{\textup{\textsf{dpk}}}
\newcommand{\dsk}{\textup{\textsf{dsk}}}
\newcommand{\wkg}{\textup{\textsf{WatKG}}}
\newcommand{\dkg}{\textup{\textsf{DetKG}}}
\newcommand{\watm}{\textup{\textsf{WatMar}}}
\newcommand{\detect}{\textup{\textsf{Detect}}}
\newcommand{\D}{\mathcal{D}}
\newcommand{\gamedistor}{\textup{\textbf{G}}^{\textup{dist-fr}}_{\mddw,\A}(\lambda)}
\newcommand{\advdistor}{\textup{\textbf{Adv}}^{\textup{dist-fr}}_{\mddw,\A}(\lambda)}
\newcommand{\gamerob}{\textup{\textbf{G}}^{\textup{rob}}_{\mddw,\A}(\lambda)}
\newcommand{\advrob}{\textup{\textbf{Adv}}^{\textup{rob}}_{\mddw,\A}(\lambda)}
\newcommand{\gameotrdsmddw}{\textup{\textbf{G}}^{\textup{otr-ds}}_{\mddw,\A,\forgeds}(\lambda)}
\newcommand{\advotrdsmddw}{\textup{\textbf{Adv}}^{\textup{otr-ds}}_{\mddw,\A,\forgeds}(\lambda)}
\newcommand{\gameotrasmddw}{\textup{\textbf{G}}^{\textup{otr-as}}_{\mddw,\A,\forgeas}(\lambda)}
\newcommand{\advotrasmddw}{\textup{\textbf{Adv}}^{\textup{otr-as}}_{\mddw,\A,\forgeas}(\lambda)}
\newcommand{\gameclaimunfmddw}{\textup{\textbf{G}}^{\textup{clm-unf}}_{\mddw,\A}(\lambda)}
\newcommand{\advclaimunfmddw}{\textup{\textbf{Adv}}^{\textup{clm-unf}}_{\mddw,\A}(\lambda)}
\newcommand{\gameunframemddw}{\textup{\textbf{G}}^{\textup{non-fram}}_{\mddw,\A}(\lambda)}
\newcommand{\advunframemddw}{\textup{\textbf{Adv}}^{\textup{non-fram}}_{\mddw,\A}(\lambda)}
\newcommand{\gameconsmdvs}{\textup{\textbf{G}}^{\textup{cons}}_{\mdvs,\A}(\lambda)}
\newcommand{\advconsmdvs}{\textup{\textbf{Adv}}^{\textup{cons}}_{\mdvs,\A}(\lambda)}
\newcommand{\gameunfmdvs}{\textup{\textbf{G}}^{\textup{unforg}}_{\mdvs,\A}(\lambda)}
\newcommand{\advunfmdvs}{\textup{\textbf{Adv}}^{\textup{unforg}}_{\mdvs,\A}(\lambda)}
\newcommand{\gameotrdsmdvs}{\textup{\textbf{G}}^{\textup{otr-ds}}_{\mdvs,\A,\forgeds}(\lambda)}
\newcommand{\advotrdsmdvs}{\textup{\textbf{Adv}}^{\textup{otr-ds}}_{\mdvs,\A,\forgeds}(\lambda)}
\newcommand{\gameotrasmdvs}{\textup{\textbf{G}}^{\textup{otr-as}}_{\mdvs,\A,\forgeas}(\lambda)}
\newcommand{\advotrasmdvs}{\textup{\textbf{Adv}}^{\textup{otr-as}}_{\mdvs,\A,\forgeas}(\lambda)}
\newcommand{\gameclaimunfmdvs}{\textup{\textbf{G}}^{\textup{clm-unf}}_{\mdvs,\A}(\lambda)}
\newcommand{\advclaimunfmdvs}{\textup{\textbf{Adv}}^{\textup{clm-unf}}_{\mdvs,\A}(\lambda)}
\newcommand{\gameunframemdvs}{\textup{\textbf{G}}^{\textup{non-fram}}_{\mdvs,\A}(\lambda)}
\newcommand{\advunframemdvs}{\textup{\textbf{Adv}}^{\textup{non-fram}}_{\mdvs,\A}(\lambda)}
\newcommand{\gamepseudodvs}{\textup{\textbf{G}}^{\textup{ps-rand}}_{\dvs,\A}(\lambda)}
\newcommand{\advpseudodvs}{\textup{\textbf{Adv}}^{\textup{ps-rand}}_{\dvs,\A}(\lambda)}
\newcommand{\orask}{\mathcal{O}_{\text{\emph{SK}}}}
\newcommand{\oravk}{\mathcal{O}_{\text{\emph{VK}}}}
\newcommand{\oraspk}{\mathcal{O}_{\text{\emph{SPK}}}}
\newcommand{\oravpk}{\mathcal{O}_{\text{\emph{VPK}}}}
\newcommand{\orasign}{\mathcal{O}_{\text{\emph{S}}}}
\newcommand{\oraver}{\mathcal{O}_{\text{\emph{V}}}}
\newcommand{\oraclm}{\mathcal{O}_{\text{\emph{Clm}}}}
\newcommand{\hashf}{\textsf{\textup{H}}}
\newcommand{\genG}{\textsf{\textup{GenG}}}
\newcommand{\sig}{\textsf{\textup{Sig}}}
\newcommand{\pkvi}{pk_{\textup{v}_i}}
\newcommand{\Sett}{\mathsf{S}}
\newcommand{\dvs}{\textsf{\textup{DVS}}}
\newcommand{\cmdvs}{\textsf{\textup{CMDVS}}}
\newcommand{\commit}{\textsf{\textup{Commit}}}
\newcommand{\Com}{\textsf{\textup{Com}}}
\newcommand{\Dec}{\textsf{\textup{Decom}}}
\newcommand{\rcom}{r_\textsf{\textup{com}}}
\newcommand{\com}{\textsf{\textup{com}}}
\newcommand{\prf}{\textsf{\textup{PRF}}}
\newcommand{\eval}{\textsf{\textup{Eval}}}
\newcommand{\cla}{\textsf{\textup{Claim}}}
\newcommand{\claver}{\textsf{\textup{ClmVer}}}
\newcommand{\sgsk}{\ssk_{\sig}}
\newcommand{\sgpk}{\spk_{\sig}}
\newcommand{\rssk}{\ssk_{\mdvs}}
\newcommand{\rspk}{\spk_{\mdvs}}
\newcommand{\rvsk}{\vsk_{\mdvs}}
\newcommand{\rvpk}{\vpk_{\mdvs}}
\newcommand{\sgsks}{\ssk_{\sig,i}}
\newcommand{\sgpks}{\spk_{\sig,i}}
\newcommand{\rssks}{\ssk_{\mdvs,i}}
\newcommand{\rspks}{\spk_{\mdvs,i}}
\newcommand{\ks}{k_{i}}
\newcommand{\dec}{\textup{\textsf{Decom}}}
\newenvironment{myitemize}{
  \begin{list}{$\bullet$}{
    \setlength{\leftmargin}{1em}
    \setlength{\itemindent}{0em}
    \setlength{\parsep}{0pt}
    \setlength{\topsep}{0pt}
    \setlength{\partopsep}{0pt}
    \setlength{\parskip}{0pt}
  }
}{
  \end{list}
}
\newcommand{\Evt}[1]{\textsf{\textup{Evt}}_{#1}}
\newcommand{\autom}{\textsf{\textup{Model}}}
\newcommand{\pdws}{\textsf{\textup{PDW}}}
\begin{document}

\mainmatter  

\title{Multi-Designated Detector Watermarking for Language Models}


\author{
	Zhengan Huang 
	\and
	Gongxian Zeng
	\and
	Xin Mu
	\and 
	Yu Wang
	\and
	Yue Yu
}

\authorrunning{Z. Huang et al.}
%
\institute{
	Pengcheng Laboratory, 
	Shenzhen, China \\
	\email{zhahuang.sjtu@gmail.com, gxzeng@cs.hku.hk, mux@pcl.ac.cn, wangy12@pcl.ac.cn, yuy@pcl.ac.cn 
	}
}

%
%




%
%

\maketitle

\begin{abstract}
	In this paper, we initiate the study of  \emph{multi-designated detector watermarking (MDDW)} for large language models (LLMs). This technique allows model providers to generate watermarked outputs from LLMs with two key properties: (i) only specific, possibly multiple, designated detectors can identify the watermarks, and (ii) there is no perceptible degradation in the output quality for ordinary users. We formalize the security definitions for MDDW and present a framework for constructing MDDW for any LLM using multi-designated verifier signatures (MDVS). Recognizing the significant economic value of LLM outputs, we introduce claimability  as an optional security feature for MDDW, enabling model providers to assert ownership of LLM outputs within designated-detector settings. To support claimable MDDW, we  propose a generic transformation converting any MDVS to a claimable MDVS. Our implementation of the MDDW scheme highlights its advanced functionalities and flexibility over existing methods, with satisfactory performance metrics.
\end{abstract}

\begin{keywords} Watermarking;    Claimability; Off-the-record; Multi-designated verifier signature; Language model
\end{keywords}

\section{Introduction}
\label{sec:intro}
Generative artificial intelligence (AI) technique, e.g., large language models (LLMs), has been widely adopted in the field of language generation and has achieved excellent performance in a variety of downstream tasks. These tasks span from machine translation \cite{hendy2023good}, dialogue system \cite{hudevcek2023llms} to code generation \cite{ni2023lever} and medicine \cite{thirunavukarasu2023large}.

However, the abuse of LLMs may lead to several potential harms, including the generation of fake news \cite{huang2024fakegpt} and instances of academic dishonesty, such as cheating on writing and coding assignments \cite{KGW+23}. Another potential risk is that the proliferation of data fabricated by LLMs complicates the acquisition of superior models, as this data is not sourced from the real world and has to be excluded before training \cite{radford2023robust}.

Thus, a crucial challenge lies in \emph{distinguishing between texts generated by LLMs and those written by humans.}

Currently, the main approach to address the above issue is to training another AI  model for detection, such as GPTZero\footnote{GPTZero, \url{https://gptzero.me/}}. This approach hinges on the critical assumption that texts generated by LLMs exhibit unique characteristics that can be identified by AI. However, a significant flaw in this assumption lies in the fact that LLMs are deliberately engineered to produce content that is indistinguishable from human-created works. As a result, any ``black-box'' detection method is prone to \emph{high false positive and/or false negative rates} as LLMs become more realistic. Existing detectors (e.g., GPTZero, Detectgpt \cite{mitchell2023detectgpt}) do not offer guarantees of accuracy. In fact, there have already been instances where students were falsely accused, making headlines in the news \cite{fowler2023we,jimenez2023professors}.

Recently, some schemes with formal guarantees of negligible error bounds have been proposed.


For example, \cite{KGW+23} introduces watermarking for LLMs to achieve such formal guarantees. They demonstrate that a watermark can be embedded in the outputs of LLMs with large enough entropy. However, their watermarking scheme significantly alters the distribution of the generated texts and the watermark detection relies on this alteration. 
Thus, 
the methodology of the watermark detection in \cite{KGW+23} 
has led to \emph{degradation in the quality of the watermarked texts}.

To achieve the perfect quality (i.e., there is no degradation in the quality of watermarked outputs), \cite{CGZ23} proposes a method for embedding a watermark if the outputs of LLMs are sufficiently random. The main idea is that it generates the watermark with a    pseudorandom function (PRF) and utilizes the watermark (i.e., a pseudorandom number) to sample the outputs of LLMs. The perfect quality of the output is ensured by the \emph{undetectability}, the notion of which is formalized in \cite{CGZ23} and requires that without the knowledge of the secret key of the underlying PRF, the watermarked output is computationally indistinguishable from that output by the original LLM. 
However, the solution proposed in \cite{CGZ23}  has a  notable limitation. Both the watermark generation and detection require a secret key, rendering the scheme \emph{a privately detectable watermarking or a symmetric watermarking}. If the detection is desired to be outsourced, the secret key has to be shared to others, which could compromise its unforgeability. 


To address these challenges, 
\cite{FGJ+23} introduces an asymmetric solution called \emph{publicly detectable watermarking (PDW)},  
 where the generation of a watermark requires a secret key of the signature, and public detectability 
is achieved  through the public verification of the  signature. In \cite{FGJ+23}, a security  notion called \emph{distortion-freeness} is proposed for PDW, serving as the asymmetric version of undetectability from \cite{CGZ23}, to ensure that the watermarking scheme maintains the quality of the LLM output.  More specifically, distortion-freeness in \cite{FGJ+23} is defined as follows: ``without the secret watermarking key, no PPT machine can distinguish
plain LLM output from watermarked LLM output''. 
This definition aligns with the way undetectability is defined in \cite{CGZ23}. However, in the asymmetric setting, the public key is usually known to the public, and a distinguisher with the public key can easily find out whether the LLM output is watermarked or not, since the watermarking solution in \cite{FGJ+23} is publicly detectable. 
In other words, in the public-key setting,  the distortion-freeness defined in  \cite{FGJ+23} and the completeness  are contradictory. 
Furthermore, for PDW,  any third party can detect watermarks, which 
may compromise privacy and other interests  (e.g., economic interests). This universal detectability is often undesirable, particularly when restricted detectability is critical. 
\vspace{1mm}



\noindent\underline{\textbf{Contributions.}} In this paper, we initiate the study of \emph{multi-designated detector watermarking (MDDW)} for LLMs. Generally speaking, for the watermarked LLM outputs generated via  MDDW, multi-designated detectors are allowed to detect the watermarks, and all the other parties cannot distinguish the  watermarked outputs from the original LLM outputs.  The contributions of this paper is  summarized as follows. 

\begin{itemize}
	\item We introduce a new primitive called multi-designated detector watermarking (MDDW), and formalize its security notions. 
	\item We offer a framework for  constructing MDDW from any LLM and multi-designated verifier signature (MDVS), and show that it achieves the required security properties. 
	\item We provide a general method to transfer any MDVS to a claimable MDVS. Then, applying the above framework, we   obtain a claimable MDDW, which allows the model provider to provably claim that some candidate texts are indeed generated by the LLMs. 
	\item When considering only a single designated verifier, we present a more efficient concrete designated detector watermarking (DDW) in Appendix \ref{sec:efficientddw}, compared with the above solution from MDVS. We also provide a detailed experimental evaluation in Sec. \ref{sec:evaluation} to show that our schemes are practical. 
\end{itemize}

\noindent \underline{\textbf{MDDW primitive.}} In   MDDW, there are three kinds of roles involved: \emph{model providers}, \emph{designated detectors} and \emph{users}. \emph{Model providers} are the ones who  deliver   the LLM service to   users and   execute the watermarking scheme during the text generation phase.   \emph{Designated detectors} are the ones  responsible for detecting whether some text was  output by the model (by extracting and validating the watermark). \emph{Users} are the ones who use the model.

An MDDW scheme consists of five algorithms: a setup algorithm $\setup$, a key generation algorithm $\wkg$ for  model providers, a key generation algorithm $\dkg$ for designated detectors, a watermarking algorithm $\watm$ and a detection algorithm $\detect$.


Upon receiving the public parameter $\pp$ output by the setup algorithm $\setup$, the model provider can invoke $\wkg$ to generate its key pair $(\wpk, \wsk)$ for watermarking, and a designated detector can call $\dkg$ to generate its key pair $(\dpk, \dsk)$. 

The watermarking algorithm $\watm$, invoked by a model provider, takes the public parameter $\pp$, the model provider's secret watermarking key $\wsk_i$, the public keys of all the designated detectors $(\dpk_j)_{j \in S}$ (where $S$ is the index set of the designated detectors, and the same below), and a prompt $\bm{p}$ as input, and outputs a text $\bm{t}$ embedded with a watermark.

The detection algorithm $\detect$, invoked by a designated detector, takes  the public parameter $\pp$, a public watermarking key $\wpk_i$, the designated detector's  secret key $\dsk_{j'}$, the public keys of all the designated detectors $(\dpk_j)_{j \in S}$ (where $j'\in S$), and a candidate watermarked text $\bm{t}$ as input, and outputs a bit $b$, indicating whether $\bm{t}$ was watermarked. 
 

Similar to \cite{CGZ23,FGJ+23}, the watermarking scheme just uses the LLM as a black box. Thus, it works without adopting a specific LLM or using specific configuration (e.g., the specific decoder algorithm, model parameters, etc). In addition, the detection also works without access to the LLMs or corresponding configurations. 

\vspace{1mm}
\noindent\underline{\textbf{MDDW security.}} It is important to note that,  compared with  the security notions of symmetric watermarking and publicly detectable watermarking, the security requirements in the multi-designated detector setting are much more complex. 

The first challenge is ensuring consistency. If a designated detector successfully detects a watermark in a given text, how can they be assured that other designated detectors will also be able to detect it successfully?  
Another challenge is preventing designated detectors from provably convincing ordinary users that a text output by the LLMs is watermarked. If they can do so, the multi-designated detector functionality would be rendered useless. Moreover, in certain scenarios, even if designated detectors are unable to convince ordinary users of the watermark, it is crucial that the model provider  can assert the truth.  This mechanism is very helpful in MDDW, as it allows model providers to claim copyrights on content such as novels, pictures, or videos created by LLMs, which can yield considerable economic gains.

In this paper, we require that MDDW should satisfy the following   properties.
\begin{itemize}
	\item \textbf{\emph{Completeness.}}  Completeness requires for any designated  detector set $S$ and any candidate text $\bm{t}$ generated for the designated  detectors in  $S$ with MDDW, as long as $\bm{t}$ is long enough (to embed a watermark), each detector in $S$ should be able to extract and validate the watermark from $\bm{t}$ successfully   using their individual secret detection keys. 
	\item \textbf{\emph{Consistency.}} Consistency aims to ensure that for any text $\bm{t}$, including those created maliciously,  two designated detectors with uncompromised secret keys should not yield different detection results. Furthermore, if one designated detector accepts that the text $\bm{t}$ is generated by the LLM, all other designated detectors in $S$ should obtain the same outcome.
	\item \textbf{\emph{Soundness.}} Soundness is formalized to ensure that no PPT adversary can forge a watermarked text in the name of some model provider, such that some designated detector, whose secret key is not compromised, would accept the text is output by the corresponding model owned by this model provider.
	\item \textbf{\emph{Distortion-freeness.}} As discussed before, the purpose of distortion-freeness is to guarantee that the watermarking scheme does not degrade the quality of the LMM output. Specifically, distortion-freeness for MDDW  formalized to require that  for any PPT distinguisher without the secret detection keys of the designated detectors in $S$, a watermarked text  generated for the designated detectors in $S$ by MDDW should be indistinguishable from   a text output by the original LLM.
	\item \textbf{\emph{Robustness.}} It is likely that the text obtained from MDDW is modified before publication. The watermark detector should be robust enough to  detect a watermark even if the text has been artificially altered, provided the text semantics are preserved.  However, in extreme cases where a significant portion of the text is modified, the watermark should normally become undetectable, as the text effectively becomes the adversary's creation. Thus, we use a ``soft'' definition of robustness. Informally, if a continuous segment of text with length $\delta$ remains unaltered, then it is highly unlikely that any designated detector will fail to detect the watermark. We term this property $\delta$-robust. 
	\item \textbf{\emph{Off-the-record property for designated set.}} This property is formalized  to ensure that a  text $\bm{t}$ can be simulated by all the designated detectors in $S$, such that  it is  indistinguishable from a watermarked text generated for them by MDDW, from the perspective of any third party, even if they possess all the secret detection keys of the designated detectors in $S$. Intuitively, this prevents   the designated detectors from  convincing a third party that the text $\bm{t}$ is watermarked by the model provider, as it can be simulated by the designated detectors themselves. Moreover,   this property allows   users to deny using the LLMs when confronted with third-party suspicions, even if some designated detectors have exposed their secret keys. 
\end{itemize}

In addition, we introduce two \emph{optional} security properties for MDDW: \emph{off-the-record property for any subset} and \emph{claimability}.
\begin{itemize}
	\item \textbf{\emph{Off-the-record property for any subset.}} This property  is similar to the  off-the-record property for designated set, with  key differences:    the plausible-looking text $\bm{t}$ can be simulated  by \emph{any subset} of the designated detectors in $S$, and it is indistinguishable from   a watermarked text generated for the designated detectors in $S$ by MDDW, from the perspective of any third party, even if they possess  the secret detection keys of \emph{the subset that produced $\bm{t}$}. It is crucial to note that the requirement of ``any subset'' enhances the meaningfulness of the off-the-record property. For example, if two designated detectors in $S$ are unable to communicate, the off-the-record property for  designated set loses its significance.   Thus, off-the-record for any subset is a stronger security notion. 
	\item \textbf{\emph{Claimability.}} 
	Claimability is designed to give the model provider a means to convincingly demonstrate  the public that specific watermarked texts were generated by the provider's LLM. 
	This requires two algorithms: $\cla$ and $\claver$. The model provider runs $\cla$ with their  secret watermarking key to create proof $\pi$ for some  watermarked text $\bm{t}$. The public then uses $\claver$ to verify this proof $\pi$. Claimability requires that (i) proofs created with $\cla$ must be correctly validated by $\claver$, 	(ii) only the model provider can create   proofs to claim the LLM's output, and (iii) no one can falsely accuse another provider of generating the watermarked text $\bm{t}$.
\end{itemize}


\noindent\underline{\textbf{MDDW construction.}} The following outlines a technical overview of our MDDW construction.  

\vspace{1mm}
\noindent \underline{\emph{Framework.}} We propose a framework for building MDDW, which applies to any LLM, based on a  multi-designated verifier signature (MDVS) scheme. Our method is inspired by \cite{CGZ23,FGJ+23}. 


Before providing the high-level description of our framework, we firstly abstract the LLMs as in  \cite{CGZ23,FGJ+23,CHS24}, disregarding the specifics of their implementations. LLMs have a ``vocabulary'' $\mathcal{T}$ consisting of words or word fragments called ``tokens''. These models employ neural networks to process and generate tokens, trained on varied datasets to learn and predict language patterns.  During text generation, an LLM takes a prompt $\bm{p}$ as input and produces a sequence of tokens $\bm{t}$, with each subsequent token predicted based on the preceding context. For notation, let $\autom$ denote an auto-regressive language model, which accepts a prompt $\bm{p} \in \mathcal{T}$ and previously generated tokens $\bm{t}$ as input, outputting subsequent tokens over $\mathcal{T}$. For any polynomial $n$, $\textsf{GenModel}_n$ iteratively invokes $\autom$ to generate $n$ tokens. 

Next, we briefly recall another building block of our framework,  MDVS. An MDVS scheme consists of five algorithms: a setup algorithm $\setup$, a signing key generation algorithm $\skg$ for signers, a verification key generation algorithm $\vkg$ for verifiers, a signing algorithm $\sign$ and a verification algorithm $\verify$.
In a nutshell, given a MDVS signature $\sigma$ generated for the designated verifiers in a set $S$ by $\sign$, only the designated verifiers in $S$ can check the validity of the signature with $\verify$.

Now, we show a high-level description of our framework as follows. For a visual depiction, please see Fig. \ref{fig:MDDW_framework}.

The setup algorithm of MDDW is the same is the setup algorithm of MDVS. The public/secret watermarking (resp., detection)  key pairs are generated by invoking the signing (resp., verification) key generation algorithm of the MDVS.

Given the  watermarking secret  key $\wsk$ (which belongs to a model provider), a designated detector set $S$ (whose public keys are $\{\dpk_j\}_{j\in S}$), and a prompt $\bm{p}$,  the  watermarking algorithm $\watm$, run by the model provider,   proceeds as follows.

\begin{figure}[h]
	\centering
\includegraphics[width=0.78\textwidth]{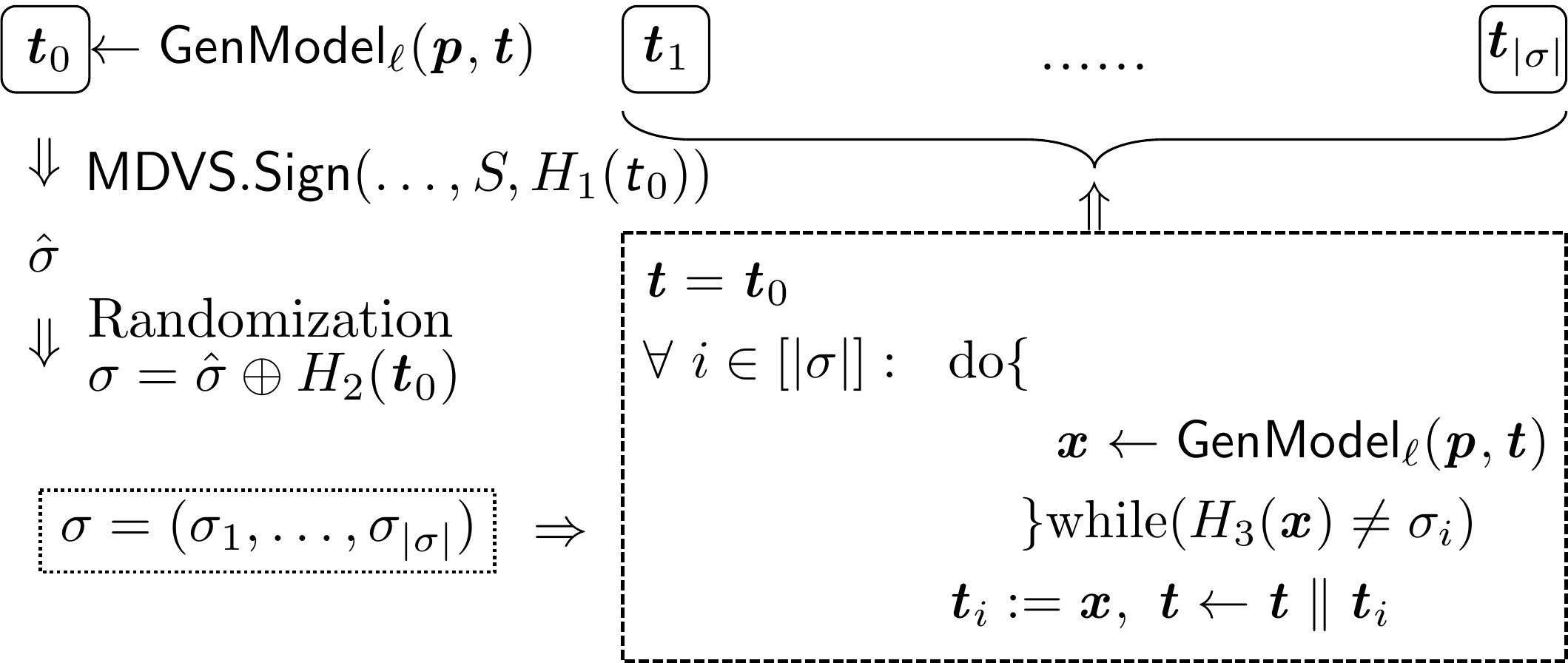}	
 \captionsetup{justification=centering}
	\caption{The MDDW framework based on MDVS}
	\label{fig:MDDW_framework}
\end{figure}

\begin{enumerate}
\item[(i)] \emph{Prompt Processing.} The model provider begins by using $\textsf{GenModel}_{\ell}$ (given a prompt $\bm{p}$ and the previously generated  tokens $\bm{t}$) to  generate  text in blocks of $\ell$ tokens.

\item[(ii)] \emph{Generating MDVS signatures.}  
Take the last $\ell$ tokens of the  current token sequence $\bm{t}$ as message $m$. The model provider generates a signature $\hat{\sigma}$ on $m$ for the designated detectors in $S$, with the signing algorithm of the underlying MDVS. It is crucial to note that the model provider's secret key $\wsk$ is the signing secret key, and $\{\dpk_j\}_{j\in S}$ are actually  the corresponding verification public keys of the designated detectors in $S$.

\item[(iii)] \emph{Randomizing MDVS signatures\footnote{Very recently, \cite{FGJ+23} also independently proposes  a similar randomization step in their watermarking framework in their latest version, in order to eliminate the need for pseudorandom signatures. However, as discussed in the aforementioned section, it still fails to achieve meaningful  distortion-freeness in the public-key setting.}.} This process involves performing the XOR operation on the MDVS signature $\hat{\sigma}$ and the hash value of the last $\ell$ tokens of the current token  sequence $\bm{t}$. For simplicity, we denote the XOR result as $\sigma$.

\item[(iv)] \emph{Sampling tokens with rejection.} Let $\sigma_i$ represent the $i^{th}$ bit of $\sigma$, for $i\in[|\sigma|]$. Firstly, generate a candidate $\ell$-bit token sequence $\bm{x}$ using  $\textsf{GenModel}_{\ell}$ with given a prompt $\bm{p}$ and the current  token sequence $\bm{t}$. Compute the hash value of   $\bm{x}$. If the hash value matches $\sigma_1$, then  accept $\bm{x}$ and append it to the end of $\bm{t}$.  Otherwise, reject $\bm{x}$ and generate a new candidate sequence. For $\sigma_2$, $\sigma_3$, and continuing through $\sigma_{|\sigma|}$, repeat the above steps by calculating the hash value of the last $\ell$ bits of the current $\bm{t}$ and comparing it to $\sigma_i$ ($i\in\{2,\cdots,|\sigma|\}$), appending the sequence only if they match. 
\end{enumerate}
That is the main idea for generating  the MDDW watermarked texts. 

The detection process is similar to the watermarking process, so we will not delve into many details here. Given the  watermarking public  key $\wpk$ (which belongs to a model provider), 
a designated detector set $S$ (whose public keys are $\{\dpk_j\}_{j\in S}$), a secret detection key $\dsk_{j'}$ belonging to a detector $j'\in S$, and a token sequence $\bm{t}$,  the detection algorithm $\detect$, run by the detector $j'$, proceeds   as follows. First, divide $\bm{t}$   into contiguous blocks of $\ell$ tokens each.  
Then, using the predetermined method (which involves hashing and comparing operations) and the verification algorithm of the underlying MDVS, extract a watermark or signature   from the token block (if one is embedded) with the help of the verification secret key of the designated detector (i.e., $\dsk_{j'}$). 


\vspace{1mm}
\noindent \underline{\emph{Security analysis.}} Our generic MDDW construction is based on an MDVS scheme, ensuring that many of its security properties are directly derived from those of the MDVS. For example,  the consistency of our  MDDW   is inherited directly from the consistency of the underlying MDVS, the soundness of our MDDW is also guaranteed by the existential unforgeability of the underlying MDVS,  and so on. Moreover, following \cite{CGZ23,FGJ+23}, we also assume that any contiguous block of $\ell$ tokens contains at least $\alpha$ bits of min-entropy. With this assumption and the properties of random oracle, we can show that our MDDW achieves distortion-freeness.

Regarding  the optional security requirements, the off-the-record property for any subset of our MDDW is implied by the off-the-record property for any subset of the underlying MDVS, and  some MDVS schemes provide this security  \cite{damgaard2020stronger,maurer2022multi}. However, to the best of our knowledge, it seems that currently no known security property of MDVS can ensure the claimability of our MDDW.

To ensure claimability in our generic MDDW, we introduce the security notion of claimability for MDVS,  an extension of the claimability notions for  designated verifier signatures \cite{yamashita2023designated} and ring signatures \cite{PS19}. 
	
Similar to the claimability notion for MDDW, claimability for MDVS requires two algorithms: $\cla_{\text{mdvs}}$ and $\claver_{\text{mdvs}}$. A genuine signer uses $\cla_{\text{mdvs}}$ with their secret signing key to generate proof $\pi$ for a signature $\sigma$ they created. The public then uses $\claver_{\text{mdvs}}$ to verify this proof $\pi$. This security also requires that (i) proofs created with $\cla_{\text{mdvs}}$ must be correctly validated by $\claver_{\text{mdvs}}$,	(ii) no one can successfully claim a signature that they did not generate, and (iii) no one can falsely accuse others of being the signer.
	
By applying this claimability to the underlying MDVS, we can show that our generic MDDW achieves claimability.

\vspace{1mm}
\noindent \underline{\emph{Construction of  MDVS with claimability.}} We provide a generic method to transform any MDVS into an MDVS with claimability, abbreviated as CMDVS. The main idea, with some details omitted,   is as follows. To generate a CMDVS signature, first  produce an MDVS signature $\sigma_{\text{mdvs}}$.  Then, use a standard signature scheme to sign the signer's public key and   the  MDVS signature $\sigma_{\text{mdvs}}$, producing  a standard signature. Next, commit the signer's public key, the designated  verifiers' public keys and the standard signature with a commitment scheme  to produce  a commitment $\com$. The CMDVS signature comprises $(\sigma_{\text{mdvs}},\com)$. To make a claim, the signer reveals the opening of the commitment $\com$, including the standard signature and the randomness used for the commitment. The claim  verification process involves checking the correctness of the opening and the validity of the standard signature. 

Due to the correctness of the standard signature and the commitment scheme, the generated claim will be verified successfully. Regarding the second requirement of claimability, note that the CMDVS signature contains $\com$, produced by committing the signer's public key, the designated  verifiers' public keys and the standard signature. If another party tries to claim that they generated the CMDVS signature, they would need to provide an opening of $\com$ with their own public key, distinct from the actual signer's public key. This would imply two different openings for the same commitment, contradicting the binding property of the commitment scheme. As for the third requirement of claimability, it primarily relies on    the existential unforgeability of the standard signature. This is because the generated claim contains a standard signature, which is inherently difficult to forge without access to the signer's secret key.

\vspace{1mm}
\noindent \underline{\textbf{Discussions.}} Here, we provide some discussions, which contains some further contributions and future works. 

\vspace{1mm}
\noindent \underline{\emph{More efficient concrete DDW.}} Our generic framework of MDDW implies a generic designated detector watermarking (DDW), when there is only one designated detector (i.e., there is only one designated verifier in the corresponding MDVS scheme). However, when plugging  the concrete MDVS (e.g., \cite{au2014strong}) with one designated verifier into our framework, the resulting DDW is not very efficient. In Appendix \ref{sec:efficientddw}, we consider a more efficient concrete DDW. The concrete DDW is based on the designated verifier signature (DVS) \cite{LV04,steinfeld2003universal}. Compared with the DDW scheme from the MDVS \cite{au2014strong}, the DDW from DVS \cite{LV04,steinfeld2003universal} has shorter length of watermark (appropriately only $1/4$ of the length of the DDW from \cite{au2014strong}). Thus, given a fixed length of the text output by the LLM, we can embed more watermarks, which also implies that the DDW from DVS \cite{LV04,steinfeld2003universal} is more robust. The experimental results in Sec. \ref{sec:evaluation} also show that DDW in Appendix \ref{sec:efficientddw} is practical.

\vspace{1mm}
\noindent \underline{\emph{Bit Length of  watermarks for any subset.}} When  considering the off-the-record property for any subset, we have the following theorem for the bit length of   watermarks, with a proof  inspired by  \cite[Theorem 1]{damgaard2020stronger}.
\begin{theorem}
	If an MDDW scheme supports the off-the-record property for any subset and soundness, then the size of the generated watermarks must be $\Omega(n)$, where $n$ is the number of the designated detectors (i.e., $|S|=n$). 
\end{theorem}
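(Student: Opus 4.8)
The plan is to adapt the counting argument behind \cite[Theorem~1]{damgaard2020stronger}: the off-the-record property for any subset forces the watermark to be capable of ``looking valid to exactly $T$'' for every subset $T$ of the designated detector set, while soundness is what makes ``exactly'' precise; since there are $2^{n}$ subsets, the watermark must be able to take at least $2^{n}$ values and therefore needs $\ge n$ bits.

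I would begin by fixing a model provider key pair $(\wpk,\wsk)$, a designated detector set $S$ with $|S|=n$ together with honestly generated keys $\{(\dpk_j,\dsk_j)\}_{j\in S}$, and a prompt $\bm{p}$ long enough to carry one watermark. For a candidate text $\bm{t}$ put $\mathsf{Acc}(\bm{t})=\{\,j\in S:\Pr[\detect(\pp,\wpk,\dsk_j,\{\dpk_k\}_{k\in S},\bm{t})=1]\ge\tfrac12\,\}$ (for the texts arising below the acceptance probability will turn out to be negligibly close to $0$ or $1$, so the threshold is immaterial). The heart of the proof is the claim: for every $T\subseteq S$, running the off-the-record-for-any-subset simulator $\mathsf{Sim}_{T}$ on the public data together with $\{\dsk_j\}_{j\in T}$ yields, with overwhelming probability, a text $\bm{t}_{T}$ with $\mathsf{Acc}(\bm{t}_{T})=T$.

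The inclusion $T\subseteq\mathsf{Acc}(\bm{t}_{T})$ follows from completeness and off-the-record: a genuinely watermarked text for $S$ is accepted by every $j\in S$, so an observer holding $\{\dsk_j\}_{j\in T}$ who saw some $\detect(\dsk_j,\bm{t}_{T})$ with $j\in T$ reject would distinguish $\bm{t}_{T}$ from a real watermarked text; a union bound over the $\le n$ detectors in $T$ keeps the failure probability negligible. The reverse inclusion $\mathsf{Acc}(\bm{t}_{T})\subseteq T$ is where soundness enters: fix any $j^{\ast}\in S\setminus T$; then $\mathsf{Sim}_{T}$ is a PPT machine that knows $\wpk$ and $\{\dsk_j\}_{j\in T}$ but not $\wsk$ and not $\dsk_{j^{\ast}}$, so if $\detect(\dsk_{j^{\ast}},\bm{t}_{T})=1$ with non-negligible probability, a reduction that plays the soundness game---issuing exactly the corruption queries needed to feed $\mathsf{Sim}_{T}$ its keys and then outputting the forgery $(\wpk,S,\bm{t}_{T},j^{\ast})$---wins, contradicting soundness; a union bound over the $\le n$ indices in $S\setminus T$ again costs only a negligible amount. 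Combining the two inclusions gives $\mathsf{Acc}(\bm{t}_{T})=T$ except with negligible probability, and since that failure probability is $<1$ for large parameters, a text $\bm{t}_{T}$ (hence a watermark payload) realizing acceptance pattern exactly $T$ \emph{exists} for every $T\subseteq S$.

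To conclude, the detector's verdict on a text is a function of the watermark payload it extracts together with the fixed public keys (and, in our construction, the reconstructed message, which we may take to be common to all the $\bm{t}_{T}$); hence the $2^{n}$ texts $\{\bm{t}_{T}\}_{T\subseteq S}$ carry pairwise distinct payloads, the payload space has size at least $2^{n}$, and a payload needs at least $\log_2 2^{n}=n$ bits, i.e.\ $\Omega(n)$. I expect the genuine difficulty to be modelling rather than mathematics: at the abstract MDDW level one must commit to what ``the generated watermark'' and ``its size'' mean (in our construction the bit-string $\sigma$, equivalently the number of watermark-carrying $\ell$-token blocks) and to why the $\bm{t}_T$ may be taken to share their non-watermark content, and one must be scrupulous that every probabilistic union bound ranges only over the $\le n$ detectors and never over the $2^{n}$ subsets---the surjectivity of $\mathsf{Acc}$ onto $2^{S}$ is a statement of the form ``for every $T$ there exists $\bm{t}_{T}$'', each instance of which is proved by its own independent negligible-error argument, so the exponential count never appears inside a probability.
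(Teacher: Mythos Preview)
Your proposal is correct and follows essentially the same approach as the paper's proof: both invoke the $\forgeas$ simulator on each corrupted subset $T\subseteq S$, use off-the-record to force acceptance by detectors in $T$ and soundness to force rejection by detectors outside $T$, and conclude that the watermark must encode which of the $2^{n}$ subsets is acting, hence needs $\Omega(n)$ bits. Your write-up is in fact more careful than the paper's (explicit $\mathsf{Acc}$ function, explicit attention to union bounds and to not summing over $2^{n}$ events), but the underlying argument is the same.
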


\begin{proof}
	Suppose that there exists a distinguisher $D$ who knows all detectors' secret keys. Formally, the  off-the-record property for any subset requires the existence of a PPT algorithm $\forgeas$ (please refer to Def. \ref{def:mddw_otr_as}), that takes the public parameter $\pp$, the public key of the signer $\wpk_i$, the public keys of the designated detectors $(\dpk_j)_{j \in S}$, the secret keys of the designated detectors in the corrupted set $(\dsk_j)_{j \in \scor}$ and a prompt $\bm{p}$ as input, and outputs a text $\bm{t}$, which is indistinguishable from a real one output by $\watm$.
	
	Given a text generated by $\forgeas$ with a corrupted detector set $\scor \subseteq S$, $D$ can distinguish whether some designated detector is in $\scor$ or not, by verifying the validity of the watermark (i.e., the MDVS signature). Note that the  off-the-record property for any subset  and soundness guarantee that only the detectors in $\scor$ would accept the watermark. Thus, $D$ can determine the set $\scor$ when given a text generated by $\forgeas$. Therefore, the watermark must contain enough bits to indicate $\scor$. Since $\scor \subseteq S$ and it can be an arbitrary subset, there are $2^{|S|} = 2^n$ kinds of different subsets. Thus, the bit length of watermark is at least $\log_2 2^{|S|} = \log_2 2^n = n$. Considering that the text output by $\watm$ is indistinguishable from that output by $\forgeas$, the bit length of watermark embedded in the text output by $\watm$ is also $\Omega(n)$.
	\qed
\end{proof}

\noindent\underline{\emph{Strengthening the security models.}} It is important to note that in this paper,   claimability  is defined as \emph{optional} for MDDW. If it is considered a mandatory security requirement, then an MDDW scheme must include the algorithms $\cla$ and $\claver$. In this case, several security models (e.g., the game defining soundness) can be strengthened by granting the adversary access to the claiming oracle.  Similar enhancements can be applied to the security models of MDVS. We will not elaborate on the details and leave this for future work.

\vspace{1mm}
\noindent \underline{\emph{Error-correcting code.}} 
We note that several works \cite{FGJ+23,qu2024provably,christ2024pseudorandom} have considered the use of error-correction codes to enhance watermarking schemes. It improves robustness and relaxes the entropy assumption by allowing the watermarking scheme to tolerate errors caused by adversarial modifications or low entropy token distribution in rejection sampling. We emphasize that error-correction codes can also be applied to our work to achieve the same effects. Since error-correction codes are essentially specialized coding schemes that alter the form of signatures, they would not affect our MDDW framework.

\vspace{1mm}
\noindent \underline{\emph{Empirical attacks.}} We stress that, similar to existing works like  \cite{KGW+23,CGZ23,FGJ+23}, our scheme is also vulnerable to certain empirical attacks, like the emoji attack and translation attack.   For a more detailed introduction to these attacks, please refer to \cite{KGW+23,CGZ23}.  Actually,  \cite{CGZ23} concludes that no undetectable watermarking schemes can be completely unremovable. 


\vspace{1mm}

\noindent \underline{\textbf{Other watermarking schemes.}} 
Recently, there is a line of works using   cryptographic tools to construct watermarking schemes for LLMs, besides the aforementioned works \cite{KGW+23,CGZ23,FGJ+23}. Here, we summarized some works as follows.


Piet et al. \cite{DBLP:journals/corr/abs-2312-00273} systematically analyze watermarking schemes in the secret key setting. Their study focuses on assessing generation quality and robustness to minor edits for practical protocol parameters. They conclude that distortion-freeness is too strong a property for practice. However, as pointed  out in \cite{FGJ+23}, this conclusion was drawn from quality assessment performed by the chat version of Llama 2 7B \cite{touvron2023llama}, and they remark that Llama 2 7B's quality assessment likely does not generalize, since higher fidelity models may reveal weaknesses in distortion-inducing watermarking schemes. 

Zhang et al. \cite{zhang2023watermarks} formally prove that ``strong'' robustness is impossible in watermarking schemes. They further demonstrate that their attack works in practice against a range of secret-key watermarking schemes (including the \cite{KGW+23} scheme). On the other hand, \cite{zhang2023watermarks} also points out that ``weak watermarking schemes'' do exist and can be useful. To be noticed, our security notion of robustness is a kind of weak notions.

Qu et al. \cite{qu2024provably} present a watermarking scheme for LLMs that uses error correction codes to gain robustness. Similarly,    Christ  and Gunn  \cite{christ2024pseudorandom} construct  pseudorandom error-correction codes and applies it to show an undetectable watermarking scheme. 

\vspace{1mm}
\noindent\underline{\textbf{Roadmap.}} We recall some preliminaries in Sec. \ref{sec:preliminary}. Then, we introduce the primitive of MDDW and formalize its security notions in Sec. \ref{sec:MDDW_def}. In Sec. \ref{sec:mddw_const}, we provide a framework of constructing MDDW for any LLM from  MDVS, and show that it achieves all   required security properties.  Finally, we present the experimental results in Sec. \ref{sec:evaluation}.

\section{Preliminaries}\label{sec:preliminary}
\underline{\textbf{Notations.}} We assume that the security parameter  $\lambda$ is an (implicit) input to all algorithms. For any $n\in\mathbb{N}$, let $[n]:=\{1,2,\cdots,n\}$. For a finite set $S$, let $|S|$ denote the number of elements in $S$, and  $x\leftarrow S$ denote the process of sampling $x$ from $S$ uniformly at random. For a distribution $\textsf{Dist}$, we use $x\leftarrow\textsf{Dist}$ to denote the process of sampling $x$ from $\textsf{Dist}$. A function $f$ of $\lambda$ is \emph{negligible} if $f\in O(\frac{1}{\textsf{poly}(\lambda)})$ for any polynomial $\textsf{poly}(\cdot)$. For simplicity, we write $f(\lambda)\leq\negl$ to mean that $f$ is negligible. Throughout this paper, let $\epsilon$ denote the empty list or empty string.   

For a random variable $X$, the \emph{min-entropy} $\mathbf{H}_{\infty}(X)$ is $-\log(\max_x\Pr[X=x])$.  

Let $a||b$ denote the tail-to-head concatenation of $a$ and $b$. In a bit string $a$, $a_i$ represents the $i^{th}$ bit of $a$, unless indicated otherwise. 

Throughout this paper, bold lower-case letters denote vectors or sequence of tokens, e.g., $\bm{t}=(t_1,\ldots,t_{|\bm{t}|})$ is a $|\bm{t}|$-dimension vector, and usually the number of dimensions can be inferred from the context.
For a vector or sequence of tokens  $\bm{t}$, let $\bm{t}[i]$ denote the $i^{th}$ element of $(t_1,\cdots,t_{|\bm{t}|})$, and  $\bm{t}[-i]$ denote the  $i^{th}$  last element  of $(t_1,\cdots,t_{|\bm{t}|})$. For $i\leq j$,  $\bm{t}[i:j]$ denotes $(t_i,t_{i+1},\cdots,t_{j})$, and  $\bm{t}[-i:]$ denotes the last $i$ elements  of $(t_1,\cdots,t_{|\bm{t}|})$. We sometimes adapt these notations to bit strings, such as $a[1]$ and $a[-i]$.




\subsection{Language models}\label{sec:language_model}
We follow \cite{KGW+23,CGZ23,FGJ+23} in our definition of a language model, and refer to language models simply as \emph{models} in this paper. 

\begin{definition}[Auto-regressive model]  An \emph{auto-regressive model} $\model$ over token vocabulary $\tg$ is a 
deterministic algorithm that takes as input a prompt $\bm{p}\in\tg$ and tokens previously output by the model $\bm{t}\in\tg$
and outputs a probability distribution $p=\model(\bm{p},\bm{t})$ over $\tg$.
\end{definition}

For all $\ell\in\mathbb{N}$, $\genmodel_{\ell}$ wraps around $\model$ to implement a generative model. $\genmodel_{\ell}$  iteratively generates $\ell$  tokens.
$\decode$ is the specific decoding method, e.g., $\mathsf{argmax}$.

\vspace{-3mm}
\begin{algorithm}[h]
\caption{Generative  model  $\genmodel_{\ell}(\bm{p}\in\tg, \bm{t}\in\tg)$ }
\begin{algorithmic}[1]
\For {$i=1,\cdots,\ell$} 
\State $\bm{t}\leftarrow\bm{t}||\decode(\model(\bm{p},\bm{t}))$
\EndFor
\State \Return $\bm{t}$
\end{algorithmic}
\end{algorithm}

Following \cite{FGJ+23}, we assume that any contiguous block of $\ell\in\mathbb{N}$ tokens contains at least $\alpha$ bits of min-entropy, i.e., no particular
sample is more than $2^{-\alpha}$ likely to happen. 
\begin{assumption}\label{assumption} For any prompt $\bm{p}$ and tokens $\bm{t}$, the new tokens $\bm{t}'\leftarrow\genmodel_\ell(\bm{p}, \bm{t})\in\tg^{\ell}$ were sampled
from distributions with min-entropy at least $\alpha$.
\end{assumption}

\noindent\underline{\textbf{Entities.}}  
In the scenarios discussed in this paper, three types of entities are involved.
\begin{itemize}
	\item\emph{Model provider.} The model provider delivers the large language model (LLM) service, generating LLM outputs for specified configurations based on received prompts.  An honest model provider will execute the watermarking scheme using their watermarking secret key  during the text generation phase.
	\item\emph{Designated detector.} The designated detector, using their own detection secret key, checks if a text is watermarked with respect to a specific model provider, without access to the model weights or the model provider's secret watermarking key. 
	\item\emph{User.} Users create prompts that are sent to the LLM to receive the model output in return.
\end{itemize}

\subsection{Multi-designated verifier signature}\label{sec:MDVS_definition}
We recall the definition of multi-designated verifier signature (MDVS) from  \cite{damgaard2020stronger,maurer2022multi}, with some adjustments. 


An MDVS scheme, associated with message space $\M$, is a tuple of algorithms $\mdvs=(\setup,\skg,\vkg,\sign, \verify)$, where 
\begin{myitemize}
	\item
	$\setup(1^\lambda) \rightarrow \pp$: On input the security parameter $\lambda$, the setup algorithm  outputs a public parameter  $\pp$.
	\item
	$\skg(\pp) \rightarrow (\spk,\ssk)$: On input  $\pp$, the signing key generation algorithm  outputs a public key $\spk$ and a secret key $\ssk$ for a signer. 
	\item
	$\vkg(\pp) \rightarrow (\vpk,\vsk)$: On input  $\pp$, the verification key generation algorithm outputs a public key $\vpk$ and a secret key $\vsk$ for a verifier.
	\item
	$\sign(\pp,\ssk_i,\{\vpk_j\}_{j \in S},m) \rightarrow \sigma$: On input  $\pp$, a  signing secret key $\ssk_i$,  public keys of the designated verifiers $\{\vpk_j\}_{j\in S}$, and a message $m$, the signing algorithm outputs a signature $\sigma$.
	\item
	$\verify(\pp,\spk_i,\vsk_{j'},\{\vpk_j\}_{j \in S},m,\sigma) \rightarrow b$: On input  $\pp$, a signing public   key $\spk_i$, a secret key $\vsk_{j'}$ of a   verifier such that $j' \in S$,  public keys of the designated verifiers $\{\vpk_j\}_{j\in S}$, a message $m$, and a signature $\sigma$, the verification algorithm outputs a bit $b\in\{0,1\}$.
\end{myitemize}

\begin{figure}[t!]\small
	\centering
	\fbox{\parbox{1.05\linewidth}{\shortstack[l]{
				\underline{$\gameconsmdvs$:}\\
				$\pp \leftarrow \setup(1^{\lambda})$\\
				$(i^*,S^*,m^*,\sigma^*) \leftarrow \A^{\orask,\oravk,\oraspk,\oravpk,\orasign,\oraver}(\pp)$\\
				If $\exists~j^*_0,j^*_1 \in S^*$, such that\\
				$\qquad$$\verify(\pp,\spk_{i^*},\vsk_{j^*_0},\{\vpk_j\}_{j\in S^*},m^*,\sigma^*) \neq \verify(\pp,\spk_{i^*},\vsk_{j^*_1},\{\vpk_j\}_{j\in S^*},m^*,\sigma^*)$,\\ $\qquad$where all keys are the honestly generated outputs of  the   key generation  oracles, and \\
				$\qquad$$\oravk$ is never queried on  $j^*_0$ or $j^*_1$\\ 
				$\quad$then return $1$\\
				Return $0$
			}
	}}
	\fbox{\parbox{1.05\linewidth}{\shortstack[l]{
				\underline{$\gameunfmdvs$:}\\
				$\pp \leftarrow \setup(1^{\lambda})$\\
				$(i^*,S^*,m^*,\sigma^*) \leftarrow \A^{\orask,\oravk,\oraspk,\oravpk,\orasign,\oraver}(\pp)$\\
				$\qquad$where $\orask$ is never queried on $i^*$, and $\orasign$ is never queried on $(i^*,S^*,m^*)$\\
				If $\exists~j^* \in S^*$, such that\\
				$\qquad$$\verify(\pp,\spk_{i^*},\vsk_{j^*},\{\vpk_j\}_{j\in S^*},m^*,\sigma^*)=1$, where all keys are the honestly\\
			$\qquad$generated  outputs  of the   key generation oracles, and  $\oravk$ is never queried on $j^*$ \\
				$\quad$then return $1$\\
				Return $0$
			}
	}}
	\captionsetup{justification=centering}
	\caption{\small Games $\gameconsmdvs$ and   $\gameunfmdvs$  for MDVS, and the oracles are given in Fig.  \ref{fig:MDVS_oracle}}
	\label{fig:mdvs_cons}
\end{figure}

\begin{figure}[t!]\scriptsize
	\centering
	\fbox{\parbox{\linewidth}{
			\textbf{Signer Key Generation Oracle} $\orask(i)$:
			\begin{enumerate}[itemsep=2pt,topsep=0pt,parsep=0pt]
				\item On the first call to $\orask$ on  $i$, compute $(\spk_i,\ssk_i)\leftarrow\skg(\pp)$, output and store $(\spk_i,\ssk_i)$.
				\item On subsequent calls, simply output $(\spk_i,\ssk_i)$. 
			\end{enumerate}
			\vspace{0.4\baselineskip}
			\textbf{Verifier Key Generation Oracle} $\oravk(j)$: 
			\begin{enumerate}[itemsep=2pt,topsep=0pt,parsep=0pt]
				\item On the first call to $\oravk$ on   $j$, compute $(\vpk_j,\vsk_j)\leftarrow\vkg(\pp)$, output and store $(\vpk_j,\vsk_j)$.
				\item On subsequent calls, simply output $(\vpk_j,\vsk_j)$. 
			\end{enumerate}
			\vspace{0.4\baselineskip}
			\textbf{Public Signer Key Generation Oracle}  $\oraspk(i)$: 
			\begin{enumerate}[itemsep=2pt,topsep=0pt,parsep=0pt]
				\item $(\spk_i,\ssk_i)\leftarrow\orask(i)$; output $\spk_i$.
			\end{enumerate}
			\vspace{0.4\baselineskip}
			\textbf{Public Verifier Key Generation Oracle}  $\oravpk(j)$: 
			\begin{enumerate}[itemsep=2pt,topsep=0pt,parsep=0pt]
				\item $(\vpk_j,\vsk_j)\leftarrow\oravk(j)$; output $\vpk_j$.
			\end{enumerate}
			\vspace{0.4\baselineskip}
			\textbf{Signing Oracle} $\orasign(i,S,m)$:
			\begin{enumerate}[itemsep=2pt,topsep=0pt,parsep=0pt]
				\item $(\spk_i,\ssk_i)\leftarrow\orask(i)$, $\{\vpk_j\leftarrow \oravpk(j)\}_{j\in S}$. 
				\item Output $\sigma\leftarrow \sign(\pp,\ssk_i,\{\vpk_j\}_{j \in S},m)$.
			\end{enumerate}
			\vspace{0.4\baselineskip}
			\textbf{Verification Oracle}  $\oraver(i,j'\in S,S,m,\sigma)$: 
			\begin{enumerate}[itemsep=2pt,topsep=0pt,parsep=0pt]
				\item $\spk_i\leftarrow\oraspk(i)$, $(\vpk_{j'},\vsk_{j'})\leftarrow\oravk(j')$
				\item $\{\vpk_j\leftarrow \oravpk(j)\}_{j\in S}$. 
				\item Output $b\leftarrow\verify(\pp,\spk_i,\vsk_{j'},\{\vpk_j\}_{j \in S},m,\sigma)$.
			\end{enumerate}
	}		}
	\captionsetup{justification=centering}
	\caption{\small The oracles for the games defining security notions for MDVS
	}
	\label{fig:MDVS_oracle}
\end{figure}  

\begin{figure}[h!]\small
	\centering
		\fbox{\parbox{1.05\linewidth}{\shortstack[l]{
				\underline{$\gameotrdsmdvs$:}\\
				$\pp \leftarrow \setup(1^{\lambda})$, $b\leftarrow \{0,1\}$\\
				$b' \leftarrow \A^{\orask,\oravk,\oraspk,\oravpk,\mathcal{O}_{otr\text{-}chl}^{(b)},\oraver}(\pp)$\\
				$\qquad$where  $\mathcal{O}_{\text{\emph{otr-chl}}}^{(0)}(i,S,m)$ outputs $\orasign(i,S,m)$, $\mathcal{O}_{\text{\emph{otr-chl}}}^{(1)}(i,S,m)$ outputs $\sigma\leftarrow\forgeds(\pp,$ \\  
				$\qquad$$\spk_i,\{\vsk_j\}_{j\in S},m)$ (where all keys are the honestly  generated outputs of  the 
				key \\  
				$\qquad$generation  oracles),\\
				$\qquad$let $Q$ denote the set of query-response 
				of  $\mathcal{O}_{\text{\emph{otr-chl}}}^{(b)}$, \\
				$\qquad$all queries $(i,S,m)$ to $\mathcal{O}_{\text{\emph{otr-chl}}}^{(b)}$ should satisfy that  $\orask$ is never queried on $i$,  \\ 
				$\qquad$and  all queries $(i',j',S',m',\sigma')$ to $\oraver$ should satisfy   ``$\not\exists~(*,*,*,\sigma)\in Q$ s.t. $\sigma=\sigma'$'' \\
				If $b'=b$, then return $1$\\
				Return $0$
			}		
	}}
	\fbox{\parbox{1.05\linewidth}{\shortstack[l]{
				\underline{$\gameotrasmdvs$:}\\
				$\pp \leftarrow \setup(1^{\lambda})$, $b\leftarrow \{0,1\}$\\
				$b' \leftarrow \A^{\orask,\oravk,\oraspk,\oravpk,\mathcal{O}_{otr\text{-}chl}^{(b)},\oraver}(\pp)$\\
				$\qquad$where  $\mathcal{O}_{\text{\emph{otr-chl}}}^{(0)}(i,S,\scor,m)$ outputs $\orasign(i,S,m)$, $\mathcal{O}_{\text{\emph{otr-chl}}}^{(1)}(i,S,\scor,m)$ outputs \\  $\qquad$$\sigma\leftarrow\forgeas(\pp,\spk_i,\{\vpk_j\}_{j\in S},\{\vsk_j\}_{j\in \scor},m)$ (where all keys are the honestly  \\  $\qquad$$\qquad$generated outputs of  the 
				 key generation oracles),\\
				$\qquad$let $Q$ denote the set of query-response 
				of  $\mathcal{O}_{\text{\emph{otr-chl}}}^{(b)}$, \\
				$\qquad$all queries $(i,S,\scor,m)$ to $\mathcal{O}_{\text{\emph{otr-chl}}}^{(b)}$ should satisfy that (1) $\scor\subset S$, (2) $\orask$ is never \\
				$\qquad$$\qquad$queried on $i$, and (3) for all $j\in S\setminus\scor$, $\oravk$ is never queried on $j$,  \\ 
				$\qquad$all queries $j$ to   $\oravk$ should satisfy  ``$\not\exists~(*,S,\scor,*,*)\in Q$ s.t. $j\in S\setminus\scor$'',  \\ 
				$\qquad$and  all queries $(i',j',S',m',\sigma')$ to $\oraver$ should satisfy   ``$\not\exists~(*,*,*,*,\sigma)\in Q$ s.t. $\sigma=\sigma'$'' \\
				If $b'=b$, then return $1$\\
				Return $0$
			}		
	}}
	\captionsetup{justification=centering}
	\caption{\small Games  $\gameotrdsmdvs$ and  $\gameotrasmdvs$   for MDVS, and the oracles are given in Fig.   \ref{fig:MDVS_oracle}}
	\label{fig:mdvs_otr}
\end{figure}

In this paper, we require that an MDVS scheme should satisfy  correctness, consistency,  existential unforgeability, and off-the-record property for designated set. 

\begin{definition}[Correctness] 
	We say that $\mdvs$ is \emph{correct}, if for any signer $i$, any message $m \in \M$, any verifier identity set $S$ and any $j' \in S$, it holds that
	\begin{equation*}\small
		\Pr \left[
		\begin{aligned}
			&\pp\leftarrow \setup(1^{\lambda})\\
			&(\spk_i,\ssk_i) \leftarrow \skg(\pp)\\
			&\{(\vpk_j,\vsk_j) \leftarrow \vkg(\pp)\}_{j\in S}\\
			&\sigma \leftarrow \sign(\pp,\ssk_i,\{\vpk_j\}_{j \in S}, m)
		\end{aligned}
		~\textup{\Large{:}}~
		\begin{aligned}
			&\verify(\pp,\spk_i,\vsk_{j'},\\
			&\quad\quad\{\vpk_j\}_{j \in S},m,\sigma) \neq 1
		\end{aligned}
		\right] = 0.
	\end{equation*}
	\label{def:mdvs_correct}
\end{definition}

\begin{definition}[Consistency] 
	We say that  $\mdvs$ is \emph{consistent}, if for any PPT adversary  $\A$,  \[\advconsmdvs=\Pr[\gameconsmdvs=1] \leq \negl \] where $\gameconsmdvs$ is shown in Fig. \ref{fig:mdvs_cons}.
	\label{def:mdvs_cons}
\end{definition}

\begin{definition}[Unforgeability] 
	We say that $\mdvs$ is \emph{unforgeable},  if for any PPT adversary $\A$, 
	\[\advunfmdvs=\Pr[\gameunfmdvs=1] \leq \negl \] 
	where $\gameunfmdvs$ is shown in Fig. \ref{fig:mdvs_cons}.
	\label{def:mdvs_unf}
\end{definition}

\begin{definition}[Off-the-record for designated set]
	We say that $\mdvs$ is \emph{off-the-record for designated  set}, if there is a PPT algorithm $\forgeds$, taking $(\pp,\spk_i,\{\vsk_j\}_{j\in S},m)$   as input and outputting $\sigma$, such that  for any PPT adversary $\A$,  \[\advotrdsmdvs=|\Pr[\gameotrdsmdvs=1]-\frac{1}{2}| \leq \negl \] where $\gameotrdsmdvs$ is shown in Fig. \ref{fig:mdvs_otr}.
	\label{def:mdvs_otr_ds}
\end{definition}

Now, we recall the off-the-record property  for any subset  for MDVS \cite{damgaard2020stronger}. Note that unlike \cite{damgaard2020stronger}, in this paper, the off-the-record property is defined as \emph{optional} for MDVS. 
Note that off-the-record property for any subset implies off-the-record property for designated set. 

\begin{definition}[Off-the-record for any subset] 
	 	We say that $\mdvs$ is \emph{off-the-record for any subset}, if there is a PPT algorithm $\forgeas$, taking $(\pp,\spk_i,$ $\{\vpk_j\}_{j\in S},\{\vsk_j\}_{j\in \scor},m)$ (where $\scor\subset S$) as input and outputting $\sigma$, such that  for any PPT adversary $\A$,  \[\advotrasmdvs=|\Pr[\gameotrasmdvs=1]-\frac{1}{2}| \leq \negl \] where $\gameotrasmdvs$ is shown in Fig. \ref{fig:mdvs_otr}.
	 	\label{def:mdvs_otr_as}
	 \end{definition}

\section{Multi-designated detector watermarking}\label{sec:MDDW_def}

In this section, we introduce a primitive called \emph{multi-designated detector watermarking (MDDW)}, and formalize its security notions. 

\vspace{1mm}
An MDDW scheme for an auto-regressive model $\model$ over token vocabulary $\tg$ is a tuple of algorithms $\mddw=(\setup,\wkg,\dkg,\watm,\detect)$, where 
\begin{myitemize}
    \item
    $\setup(1^\lambda) \rightarrow\pp$: On input the security parameter $\lambda$, the setup algorithm outputs a  public parameter $\pp$.
	\item
	$\wkg(\pp) \rightarrow (\wpk,\wsk)$: On input  $\pp$, the watermarking key generation algorithm  outputs a public/secret key pair $(\wpk,\wsk)$ for watermarking. 
	\item
	$\dkg(\pp) \rightarrow (\dpk,\dsk)$: On input $\pp$, the detection key generation algorithm outputs a public/secret key pair $(\dpk,\dsk)$ for a detector.
	\item $\watm(\pp,\wsk_i,\{\dpk_j\}_{j \in S},\bm{p}) \rightarrow \bm{t}$: On input $\pp$, a  watermarking secret key $\wsk_i$,  public keys of the designated detectors $\{\dpk_j\}_{j\in S}$, and a prompt $\bm{p}\in\tg$, the watermarking  algorithm outputs   $\bm{t}\in\tg$.
	\item  $\detect(\pp,\wpk_i,\dsk_{j'},\{\dpk_j\}_{j \in S},\bm{t}) \rightarrow b$: On input  $\pp$, a  watermarking public key $\wpk_i$, a secret key $\dsk_{j'}$ of a detector such that $j' \in S$,   public keys of the designated detectors $\{\dpk_j\}_{j\in S}$,  and a candidate watermarked text  $\bm{t}\in\tg$, the detection algorithm outputs a bit $b\in\{0,1\}$.
\end{myitemize}
\vspace{1mm}

We provide further explanations here. The watermarking key generation algorithm $\wkg$ (run by the model providers) and the detection key generation algorithm $\dkg$ (run by the detectors) generate their respective public/secret key pairs. To generate watermarked texts that can only be detected by designated detectors in a set $S$, the model provider (holding watermarking secret key  $\wsk_i$) runs $\watm(\pp,\wsk_i,\{\dpk_j\}_{j \in S},\bm{p})$ to produce   $\bm{t}$. A detector $j'\in S$ can then execute  $\detect(\pp,\wpk_i,\dsk_{j'},\{\dpk_j\}_{j \in S},\bm{t})$ using their own detection secret key $\dsk_{j'}$ to check whether $\bm{t}$ is watermarked .

Regarding security, we require that an MDDW scheme should satisfy  \emph{completeness}, \emph{consistency}, \emph{soundness},  \emph{distortion-freeness},  \emph{robustness}, and \emph{off-the-record property for designated set}. The formal definitions of these security requirements are  as follows.

\begin{figure}[!t]\small
	\centering
	\fbox{\parbox{1.06\linewidth}{\shortstack[l]{
				\underline{$\gamecons$:}\\
				$\pp \leftarrow \setup(1^{\lambda})$\\ $(i^*,S^*,\bm{t}^*) \leftarrow \A^{\orawk,\oradk,\orawpk,\oradpk,\orawatm,\oradet}(\pp)$\\
				If $\exists~j_0,j_1 \in S^*$, such that $\detect(\pp,\wpk_{i^*},\dsk_{j_\beta},\{\dpk_j\}_{j \in S^*},\bm{t}^*)=\beta$ for all  $\beta\in\{0,1\}$, \\
				$\qquad$where all keys are the honestly generated outputs of  the   key generation  oracles, and \\
				$\qquad$$\oradk$ is never queried on  $j_0$ or $j_1$\\
				$\quad$then return $1$\\
				Return $0$
			}
	}}
	\fbox{\parbox{1.06\linewidth}{\shortstack[l]{
				\underline{$\gameunf$:}\\
				$\pp \leftarrow \setup(1^{\lambda})$\\
				$(i^*,S^*,\bm{t}^*) \leftarrow \A^{\orawk,\oradk,\orawpk,\oradpk,\orawatm,\oradet}(\pp)$\\
				$\qquad$where $\orawk$ is never queried on $i^*$, and all $\bm{t}_1,\bm{t}_2,\cdots$ (denoting the watermarked texts \\
				$\qquad$that $\A$ receives when querying $\orawatm$ on $(i^*,S^*, *)$)   satisfy $\textsf{NOLap}_k(\bm{t}^*,\bm{t}_1,\bm{t}_2,\cdots)=1$\\
				If $\exists~j' \in S^*$, such that $\detect(\pp,\wpk_{i^*},\dsk_{j'},\{\dpk_j\}_{j\in S^*},\bm{t}^*)=1$\\
				$\qquad$where all keys are the honestly generated outputs of  the key generation  oracles, and \\
				$\qquad$$\oradk$ is never queried on $j'$,\\
				$\quad$then return $1$\\
				Return $0$
			}
	}}
	\fbox{\parbox{1.06\linewidth}{\shortstack[l]{
				\underline{$\gamedistor$:}\\
				$\pp \leftarrow \setup(1^{\lambda})$, $b\leftarrow \{0,1\}$\\
				$b' \leftarrow \A^{\orawk,\oradk,\orawpk,\oradpk,\mathcal{O}_{\textsf{Model}},\mathcal{O}_{\text{\emph{M-chl}}}^{(b)},\oradet}(\pp)$\\
				$\qquad$where $\mathcal{O}_{\textsf{Model}}(\bm{p})$ outputs $\textsf{Model}(\bm{p})$, $\mathcal{O}_{\text{\emph{M-chl}}}^{(0)}(i,S,\bm{p})$ outputs $\orawatm(i,S,\bm{p})$, $\mathcal{O}_{\text{\emph{M-chl}}}^{(1)}(i,S,\bm{p})$ \\
				$\qquad$outputs  $\genmodel(\bm{p})$, let $Q$ denote the set of query-response of  $\mathcal{O}_{\text{\emph{M-chl}}}^{(b)}$,\\
				$\qquad$all queries $(i,S,\bm{p})$ to   $\mathcal{O}_{\text{\emph{M-chl}}}^{(b)}$   satisfy that ``$i$ has never been queried to $\orask$, and \\
				$\qquad$$\qquad$$\forall~j\in S$, $j$ has never   been queried to $\oradk$'',  \\ 
				$\qquad$all queries $j$ to   $\oradk$   satisfy that   ``$\not\exists~(i,S,*,\bm{t})\in Q$ s.t. $j\in S$'',  \\ 
				$\qquad$and  all queries $(i',j',\bm{t}')$ to $\oradet$   satisfy    ``$\forall~(i',S,*,\bm{t})\in Q$ s.t. $j'\in S$, $\textsf{NOLap}_k(\bm{t}',\bm{t})=1$''\\
				If $b'=b$, then return $1$\\
				Return $0$
	}		}}
	\captionsetup{justification=centering}
	\caption{\small Games $\gamecons$, $\gameunf$ and   $\gamedistor$   for  MDDW, and the oracles are given in Fig. \ref{fig:MDDW_oracle}}
	\label{fig:mddw_cons}
\end{figure}

\begin{definition}[Completeness] 
	We say that $\mddw$ is \emph{$\delta$-complete}, if for any  $i$, any prompt $\bm{p}\in\tg$,  any detector identity set $S$ and any $j'\in S$,  it holds that
	\begin{equation*}\small
            \Pr \left [
           \begin{aligned}
            &\pp\leftarrow \setup(1^{\lambda}),\\
            &(\wpk_i,\wsk_i)\leftarrow\wkg(\pp),\\
            &\{(\dpk_j,\dsk_j)\leftarrow \dkg(\pp)\}_{j \in S},\\
            &\bm{t} \leftarrow \watm(\pp,\wsk_i,\{\dpk_j\}_{j \in S}, \bm{p})\\
            &\quad\text{s.t.}~|\bm{t}|\geq \delta
           \end{aligned}
            \textup{\Large{:}}
           \begin{aligned}
           & \detect(\pp,\wpk_i,\dsk_{j'},\\
            &\quad\quad\{\dpk_j\}_{j \in S},\bm{t})\neq 1
        \end{aligned}
            \right ]\leq \negl.
        \end{equation*}
\end{definition}

\begin{definition}[Consistency] 
	We say that $\mddw$ is  \emph{consistent}, if for any PPT adversary $\A$, 
 \[\advcons=\Pr[\gamecons=1] \leq \negl \] 
 where $\gamecons$ is shown in Fig. \ref{fig:mddw_cons}.
	\label{def:mddw_cons}
\end{definition}

\begin{figure}[!t]
\small
	\centering
	\fbox{\parbox{\linewidth}{
			\textbf{Watermarking Key Generation Oracle:} $\orawk(i)$
			\begin{enumerate}[itemsep=2pt,topsep=0pt,parsep=0pt]
				\item 
				On the first call to $\orawk$ on $i$, compute $(\wpk_i,\wsk_i)\leftarrow\wkg(\pp)$, output and store $(\wpk_i,\wsk_i)$.
				\item 
				On subsequent calls, simply output $(\wpk_i,\wsk_i)$. 
			\end{enumerate}
			\vspace{0.4\baselineskip}
			\textbf{Detecting Key Generation Oracle:} $\oradk(j)$
			\begin{enumerate}[itemsep=2pt,topsep=0pt,parsep=0pt]
				\item On the first call to $\oradk$ on  $j$, compute $(\dpk_j,\dsk_j)\leftarrow\dkg(\pp)$, output and store $(\dpk_j,\dsk_j)$.
				\item On subsequent calls, simply output $(\dpk_j,\dsk_j)$. 
			\end{enumerate}
			\vspace{0.4\baselineskip}
			\textbf{Public Watermarking Key Generation Oracle:} $\orawpk(i)$
			\begin{enumerate}[itemsep=2pt,topsep=0pt,parsep=0pt]
				\item $(\wpk_i,\wsk_i)\leftarrow\orawk(i)$; output $\wpk_i$.
			\end{enumerate}
			\vspace{0.4\baselineskip}
			\textbf{Public Detecting Key Generation Oracle:} $\oradpk(j)$
			\begin{enumerate}[itemsep=2pt,topsep=0pt,parsep=0pt]
				\item $(\dpk_j,\dsk_j)\leftarrow\oradk(j)$; output $\dpk_j$.
			\end{enumerate}
			\vspace{0.4\baselineskip}
			\textbf{Watermarking Oracle:} $\orawatm(i,S,\bm{p})$
			\begin{enumerate}[itemsep=2pt,topsep=0pt,parsep=0pt]
				\item $(\wpk_i,\wsk_i)\leftarrow\orawk(i)$, $\{\dpk_j\leftarrow \oradpk(j)\}_{j\in S}$. 
				\item Output $\bm{t}\leftarrow \watm(\pp,\wsk_i,\{\dpk_j\}_{j \in S},\bm{p})$.
			\end{enumerate}
			\vspace{0.4\baselineskip}
			\textbf{Detecting Oracle:} $\oradet(i,j'\in S,S,\bm{t})$
			\begin{enumerate}[itemsep=2pt,topsep=0pt,parsep=0pt]
				\item $\wpk_i\leftarrow\orawpk(i)$, $(\dpk_{j'},\dsk_{j'})\leftarrow\oradk(j')$.
				\item $\{\dpk_j\leftarrow \oradpk(j)\}_{j\in S}$. 
				\item Output $b\leftarrow\detect(\pp,\wpk_i,\dsk_{j'},\{\dpk_j\}_{j \in S},\bm{t})$.
			\end{enumerate}
	}		}
	\captionsetup{justification=centering}
	\caption{\small The oracles for  defining security notions of MDDW}
	\label{fig:MDDW_oracle}
\end{figure}  

\begin{definition}[Soundness] 
	We say that $\mddw$ is \emph{$k$-sound}, if for   any PPT adversary $\A$ and any prompt $\bm{p}\in\tg$,  
 \[\advunf=\Pr[\gameunf=1] \leq \negl \] 
 where $\gameunf$ is shown in Fig. \ref{fig:mddw_cons}, and the predicate $\textsf{\upshape{NOLap}}_k(\bm{t}^*,\bm{t}_1,\bm{t}_2,\cdots)$ in Fig. \ref{fig:mddw_cons} outputs 1 if  $\bm{t}^*$ does not share a $k$-length window of tokens with any of the genuinely-watermarked texts $\bm{t}_1,\bm{t}_2,\cdots$, and outputs 0 otherwise. 
\end{definition}


\begin{definition}[Distortion-freeness] 
	We say that $\mddw$ is \emph{$k$-distortion-free}, if for   any PPT adversary   $\A$,  
 \[\advdistor=|\Pr[\gamedistor=1]-\frac{1}{2}|\leq \negl \] 
 where $\gamedistor$ is shown in Fig. \ref{fig:mddw_cons}, and the predicate $\textsf{\upshape{NOLap}}_k(\bm{t}',\bm{t})$  in Fig. \ref{fig:mddw_cons} outputs 1 if  $\bm{t}'$ does not share a $k$-length window of tokens with $\bm{t}$, and outputs 0 otherwise. 
\end{definition}

\begin{remark}
	Given that MDDW is established in the public-key framework, we adopt the term ``distortion-freeness'' as used in \cite{FGJ+23}.  We stress that our definition of distortion-freeness, like that in \cite{FGJ+23},  is formalized in the multi-query setting, allowing the  distinguisher to make multiple oracle queries adaptively. This  
	 aligns with  the undetectability definition in \cite{CGZ23}. In other words, our distortion-freeness can also be viewed as the undetectability definition from \cite{CGZ23} in the MDDW context. 
\end{remark}


\begin{figure}[!t]\small
	\centering
	\fbox{\parbox{\linewidth}{\shortstack[l]{
				\underline{$\gamerob$:}\\
				$\pp \leftarrow \setup(1^{\lambda})$\\
				$(i^*,S^*,\bm{p}^*,st) \leftarrow \A_1^{\orawk,\oradk,\orawpk,\oradpk,\orawatm,\oradet}(\pp)$\\
				$\qquad$where   $\orawk$ is never queried on $i^*$\\
				$\bm{t}^*\leftarrow\watm(\pp,\wsk_{i^*},\{\dpk_j\}_{j\in S^*},\bm{p}^*)$\\   
				$\qquad$where all keys are the honestly generated outputs of  the   key generation  oracles\\
				$\bm{t} \leftarrow \A_2^{\orawk,\oradk,\orawpk,\oradpk,\orawatm,\oradet}(\bm{t}^*,st)$\\
				$\qquad$where $\textsf{\upshape{NOLap}}_k(\bm{t},\bm{t}^*)=0$, 
				and all queries $i$ to  $\orawk$   satisfy $i\neq i^*$\\
				If $\exists~j\in S^*$ s.t. $\detect(\pp,\wpk_{i^*},\dsk_{j},\bm{t})=0$, then return $1$\\
				Return $0$
	}		}}
	\fbox{\parbox{\linewidth}{\shortstack[l]{
				\underline{$\gameotrdsmddw$:}\\
				$\pp \leftarrow \setup(1^{\lambda})$, $b\leftarrow \{0,1\}$\\
				$b' \leftarrow \A^{\orawk,\oradk,\orawpk,\oradpk,\mathcal{O}_{\text{\emph{otr-chl}}}^{(b)},\oradet}(\pp)$\\
				$\qquad$where  $\mathcal{O}_{\text{\emph{otr-chl}}}^{(0)}(i,S,\bm{p})$ outputs $\orawatm(i,S,\bm{p})$, $\mathcal{O}_{\text{\emph{otr-chl}}}^{(1)}(i,S,\bm{p})$ outputs $\bm{t}\leftarrow\forgeds(\pp,$\\  
				$\qquad$$\wpk_i,\{\dsk_j\}_{j\in S},\bm{p})$ (where all keys are the honestly  generated   outputs of the   key \\
				$\qquad$generation oracles),\\
				$\qquad$let $Q$ denote the set of query-response 
				of  $\mathcal{O}_{\text{\emph{otr-chl}}}^{(b)}$, \\
				$\qquad$all queries $(i,S,\bm{p})$ to $\mathcal{O}_{\text{\emph{otr-chl}}}^{(b)}$   satisfy that  $\orawk$ is never queried on $i$, \\ 
				$\qquad$and  all queries $(i',j',S',\bm{t}')$ to $\oradet$   satisfy   ``$\not\exists~(*,*,*,\bm{t})\in Q$ s.t. $\bm{t}=\bm{t}'$'' \\
				If $b'=b$, then return $1$\\
				Return $0$
			}		
	}}
	\fbox{\parbox{\linewidth}{\shortstack[l]{
				\underline{$\gameotrasmddw$:}\\
				$\pp \leftarrow \setup(1^{\lambda})$, $b\leftarrow \{0,1\}$\\
				$b' \leftarrow \A^{\orawk,\oradk,\orawpk,\oradpk,\mathcal{O}_{\text{\emph{otr-chl}}}^{(b)},\oradet}(\pp)$\\
				$\qquad$where  $\mathcal{O}_{\text{\emph{otr-chl}}}^{(0)}(i,S,\scor,\bm{p})$ outputs $\orawatm(i,S,\bm{p})$, $\mathcal{O}_{\text{\emph{otr-chl}}}^{(1)}(i,S,\scor,\bm{p})$ outputs \\  $\qquad$$\bm{t}\leftarrow\forgeas(\pp,\wpk_i,\{\dpk_j\}_{j\in S},\{\dsk_j\}_{j\in \scor},\bm{p})$ (where all keys are the honestly \\  $\qquad$$\qquad$generated   outputs of the   key generation oracles),\\
				$\qquad$let $Q$ denote the set of query-response 
				of  $\mathcal{O}_{\text{\emph{otr-chl}}}^{(b)}$, \\
				$\qquad$all queries $(i,S,\scor,\bm{p})$ to $\mathcal{O}_{\text{\emph{otr-chl}}}^{(b)}$   satisfy that (1) $\scor\subset S$, (2) $\orawk$ is never  \\
				$\qquad$$\qquad$queried on $i$, and (3) for all $j\in S\setminus\scor$, $\oradk$ is never queried on $j$,  \\ 
				$\qquad$all queries $j$ to   $\oradk$   satisfy  ``$\not\exists~(*,S,\scor,*,*)\in Q$ s.t. $j\in S\setminus\scor$'',  \\ 
				$\qquad$and  all queries $(i',j',S',\bm{t}')$ to $\oradet$   satisfy   ``$\not\exists~(*,*,*,*,\bm{t})\in Q$ s.t. $\bm{t}=\bm{t}'$'' \\
				If $b'=b$, then return $1$\\
				Return $0$
			}		
	}}
	\captionsetup{justification=centering}
	\caption{\small Games  $\gamerob$,  $\gameotrdsmddw$ and  $\gameotrasmddw$ for MDDW, and the oracles are given in Fig.   \ref{fig:MDDW_oracle}}
	\label{fig:mddw_otr}
\end{figure}

\begin{definition}[Robustness] 
	We say that $\mddw$ is \emph{$k$-robust}, if for 
	any PPT  adversary   $\A$,  
 \[\advrob=\Pr[\gamerob=1] \leq \negl \] 
 where $\gamerob$ is shown in Fig. \ref{fig:mddw_otr},  and the predicate $\textsf{\upshape{NOLap}}_k(\bm{t},\bm{t}^*)$  in Fig. \ref{fig:mddw_otr} outputs 1 if  $\bm{t}$ does not share a $k$-length window of tokens with $\bm{t}^*$, and outputs 0 otherwise. 
\end{definition}

\begin{definition}[Off-the-record for designated set] 
	We say that $\mddw$ is \emph{off-the-record for designated set}, if there is a PPT algorithm $\forgeds$, taking $(\pp,\wpk_i,\{\dsk_j\}_{j\in S},\bm{p})$ as input and outputting $\bm{t}$, such that  for any PPT adversary $\A$,  \[\advotrdsmddw=|\Pr[\gameotrdsmddw=1]-\frac{1}{2}| \leq \negl \] where $\gameotrdsmddw$ is shown in Fig. \ref{fig:mddw_otr}.
	\label{def:mddw_otr_ds}
\end{definition}
 

Now, we introduce two \emph{optional} security notions for MDDW: \emph{off-the-record property  for any subset}  and \emph{claimability}.  It's important to note that adherence to these two securities is \emph{not} mandatory for every MDDW scheme.


\begin{definition}[Off-the-record for any subset] 
	We say that $\mddw$ is \emph{off-the-record for any subset}, if there is a PPT algorithm $\forgeas$, taking $(\pp,\wpk_i,$ $\{\dpk_j\}_{j\in S},\{\dsk_j\}_{j\in \scor},\bm{p})$ (where $\scor\subset S$) as input and outputting $\bm{t}$, such that  for any PPT adversary $\A$,  \[\advotrasmddw=|\Pr[\gameotrasmddw=1]-\frac{1}{2}| \leq \negl \] where $\gameotrasmddw$ is shown in Fig. \ref{fig:mddw_otr}.
	\label{def:mddw_otr_as}
\end{definition}


 \begin{definition}[Claimability] 
	We say that $\mddw$ is \emph{$k$-claimable}, if there are two PPT algorithms $\cla$ and $\claver$ (where $\cla$ takes  $(\pp,\wsk_i,\{\dpk_j\}_{j\in S},\bm{t})$ as input and outputs a claim $\pi$, and  $\claver$ takes  $(\pp,\wpk_i,\{\dpk_j\}_{j\in S},\bm{t},\pi)$ as input and outputs a bit), such that 
	\begin{enumerate}
		\item for any   $i$   and any detector  identity set $S$, 	
		\begin{equation*}\small
			\Pr \left[
			\begin{aligned}
				&\pp\leftarrow \setup(1^{\lambda})\\
				&(\wpk_i,\wsk_i) \leftarrow \wkg(\pp)\\
				&((\dpk_j,\dsk_j) \leftarrow \dkg(\pp))_{j\in S}\\
				&\bm{t} \leftarrow \watm(\pp,\wsk_i,\{\dpk_j\}_{j \in S},\bm{p})\\
				&\pi\leftarrow\cla(\pp,\wsk_i,\{\dpk_j\}_{j\in S},\bm{t})
			\end{aligned}
			~\textup{\Large{:}}~
			\begin{aligned}
				&\claver(\pp,\wpk_i,\{\dpk_j\}_{j \in S},\\ &~~~~~~~~~~~~~~~~~~~~~~~~~~~\bm{t},\pi) \neq 1
			\end{aligned}
			\right] = 0.
		\end{equation*}
		\item for any PPT adversary $\A$,  \[\advclaimunfmddw=\Pr[\gameclaimunfmddw=1] \leq \negl \] where $\gameclaimunfmddw$ is shown in Fig. \ref{fig:mddw_claim}, and the predicate $\textsf{\upshape{NOLap}}_k(\bm{t},\bm{t}^*)$  in Fig. \ref{fig:mddw_claim} outputs 1 if  $\bm{t}$ does not share a $k$-length window of tokens with $\bm{t}^*$, and outputs 0 otherwise. 
		\item for any PPT adversary $\A$,  \[\advunframemddw=\Pr[\gameunframemddw=1] \leq \negl \] where $\gameunframemddw$ is shown in Fig. \ref{fig:mddw_claim}.
	\end{enumerate}
	\label{def:mddw_claim}
\end{definition}

\begin{figure}[t!]\small
	\centering
	\fbox{\parbox{\linewidth}{\shortstack[l]{
				\underline{$\gameclaimunfmddw$:}\\
				$\pp \leftarrow \setup(1^{\lambda})$\\
				$(i^*,S^*,\bm{p}^*,st)\leftarrow \A_1^{\orawk,\oradk,\orawpk,\oradpk,\orawatm,\oradet,\oraclm}(\pp)$\\
				$\qquad$where  $\orawk$ is never queried on $i^*$\\ 
				$\bm{t}^*\leftarrow\watm(\pp,\wsk_{i^*},\{\dpk_j\}_{j\in S^*},\bm{p}^*)$ \\
				$\qquad$where all keys are the  honestly generated outputs of  the key generation   oracles\\
				$(\pi^*,i')\leftarrow \A_2^{\orawk,\oradk,\orawpk,\oradpk,\orawatm,\oradet,\oraclm}(\bm{t}^*,st)$\\
				$\qquad$where all queries $i$ to $\orawk$   satisfy $i\neq i^*$, and  all queries $(*,*,\bm{t})$ to $\oraclm$  satisfy\\ $\qquad$$\qquad$$\textsf{\upshape{NOLap}}_{k}(\bm{t},\bm{t}^*)=1$ \\
				If $(\claver(\pp,\wpk_{i'},\{\dpk_j\}_{j\in S^*},\bm{t}^*,\pi^*)=1)\wedge(i'\neq i^*)$, then return $1$\\
				Return $0$				
	}		}}
	\fbox{\parbox{\linewidth}{\shortstack[l]{
				\underline{$\gameunframemddw$:}\\
				$\pp \leftarrow \setup(1^{\lambda})$\\
				$(i^*,S^*,\bm{t}^*,\pi^*)\leftarrow \A^{\orawk,\oradk,\orawpk,\oradpk,\orawatm,\oradet,\oraclm}(\pp)$\\
				$\qquad$where $\orawk$ is never queried on $i^*$, and $\oraclm$ is never queried on  $(i^*,S^*,\bm{t})$\\
				$\qquad$$\qquad$satisfying $\textsf{\upshape{NOLap}}_{k}(\bm{t},\bm{t}^*)=1$\\ 
				If $\claver(\pp,\wpk_{i^*},\{\dpk_j\}_{j\in S^*},\bm{t}^*,\pi^*)=1$ and \\ 
				 $\qquad$$\exists~\mu\in S^*$,  $\detect(\pp,\wpk_{i^*},\dsk_{\mu},\{\dpk_j\}_{j \in S^*},\bm{t}^*)$=1, then return 1\\
				Return $0$				
	}		}}
	\fbox{\parbox{\linewidth}{\shortstack[l]{
				\textbf{Claiming Oracle} $\oraclm(i,S,\bm{t})$:\\
				$\qquad$(1) $(\wpk_i,\wsk_i)\leftarrow\orawk(i)$, $\{\dpk_j\leftarrow \oradpk(j)\}_{j\in S}$.\\
				$\qquad$(2)   Output $\pi\leftarrow\cla(\pp,\wsk_i,\{\dpk_j\}_{j\in S},\bm{t})$.
	}		}}
	\captionsetup{justification=centering}
	\caption{Games  $\gameclaimunfmddw$ and $\gameunframemddw$ for MDDW, and some oracles are given in Fig. \ref{fig:MDDW_oracle}}
	\label{fig:mddw_claim}
\end{figure}  

Here, we provide further explanations. When a model provider intends to claim that certain watermarked texts are generated by their LLM (e.g., to establish copyright), they run  $\cla$ with their watermarking secret key to generate a claim $\pi$. Any one can use $\claver$ to verify the claim. 

The notion of claimability for MDDW is inspired by the  claimability  concepts in ring signature \cite{PS19} and designated-verifier signature  \cite{yamashita2023designated}. Generally speaking, the first requirement in the claimability definition is correctness. The second requirement is that even an  adversarial model provider cannot successfully claim a watermarked text they did not produce. The third requirement is that no one can falsely accuse another provider of generating   watermarked texts.

\section{MDDW construction}\label{sec:mddw_const}


In this section, we present a framework for building    MDDW from MDVS and demonstrate its  fulfillment of the required security properties. Additionally, we show that if the underlying MDVS scheme is off-the-record for any subset, the constructed MDDW also achieves the off-the-record property for any subset. Furthermore, by introducing  the notion of claimability for MDVS, we demonstrate that our generic MDDW scheme achieves claimability when  the underlying MDVS scheme possesses this property. Finally, we provide  a generic method for constructing claimable MDVS by showing a transformation that converts any MDVS scheme into one that is claimable.

\subsection{Generic construction of MDDW}\label{sec:generic_construction}


Let  $\mdvs=(\mdvs.\setup,\skg,\vkg,\sign,\verify)$ be an MDVS scheme with signature length  $\textsf{len}_{\text{sig}}$. Let $\M$  denote  the message space of $\mdvs$, and $\mathcal{SG}$ denote  the signature space of $\mdvs$. Let $H_1:\{0,1\}^*\rightarrow\M$, $H_2:\{0,1\}^*\rightarrow\mathcal{SG}$ and  $H_3:\{0,1\}^*\rightarrow\{0,1\}$ be  hash functions,  which will all be modeled as  random oracles in the security proof. 

Our MDDW scheme $\mddw_{n,\ell,\textsf{len}_{\text{sig}}}=(\setup,\wkg,\dkg,$ $\watm,\detect)$, for some predefined parameters $n$ and $\ell$, is as follows.\footnote{Note that according to Assumption \ref{assumption},  any contiguous block of $\ell$ tokens contains at least $\alpha$ bits of min-entropy.}
\begin{myitemize}
	\item[\textbullet]
	$\setup(1^\lambda)$: Compute $\pp\leftarrow \mdvs.\setup(1^\lambda)$, and return $\pp$ as the public parameter  of MDDW. 
	\item[\textbullet]
	$\wkg(\pp)$: Generate  $(\spk,\ssk)\leftarrow\skg(\pp)$, and set $\wpk=\spk$ and $\wsk=\ssk$. Return $(\wpk,\wsk)$. 
	\item[\textbullet]
	$\dkg(\pp)$: Generate  $(\vpk,\vsk)\leftarrow\vkg(\pp)$, and set $\dpk=\vpk$ and $\dsk=\vsk$. Return $(\dpk,\dsk)$. 
	\item[\textbullet] $\watm(\pp,\wsk_i,\{\dpk_j\}_{j \in S},\bm{p})$: Its description is shown in \textbf{Algorithm \ref{alg:Watm}}.
	\item[\textbullet] $\detect(\pp,\wpk_i,\dsk_{j'},\{\dpk_j\}_{j \in S},\bm{t})$: Its description is shown in \textbf{Algorithm \ref{alg:Det}}. 
\end{myitemize}

\begin{algorithm}[h]\small
	\caption{$\watm(\pp,\wsk_i,\{\dpk_j\}_{j \in S},\bm{p})$ \label{alg:Watm} }
	\begin{algorithmic}[1]
		\State $\ssk_i\leftarrow \wsk_i$, $\{\vpk_j\}_{j \in S}\leftarrow\{\dpk_j\}_{j \in S}$, $\bm{t}\leftarrow\epsilon$
		
		\While{$|\bm{t}|+\ell+\ell\cdot\textsf{len}_{\text{sig}}<n$}
		
		\State $\bm{t}\leftarrow\bm{t}||\genmodel_\ell(\bm{p},\bm{t})$
		
		\State $\hat{\bm{\sigma}}\leftarrow\sign(\pp,\ssk_i,\{\vpk_j\}_{j\in S}, H_1(\bm{t}[-\ell:]))$
		
		\State $\bm{\sigma}\leftarrow \hat{\bm{\sigma}} \oplus H_2(\bm{t}[-\ell:])$
		
		\State $\bm{\sigma}_{\textup{prev}}\leftarrow\epsilon$, 		
		$~\bm{m}\leftarrow\epsilon$
		
		\While{$\bm{\sigma}\neq\epsilon$}		
		
		\State $\sigma_{\text{bit}}\leftarrow \bm{\sigma}[1]$, $\bm{\sigma}\leftarrow\bm{\sigma}[2:]$
		
		\State $\bm{x}\leftarrow\genmodel_\ell(\bm{p},\bm{t})$
		
		\While{$H_3(\bm{m}||\bm{x}||\bm{\sigma}_{\textup{prev}})\neq\sigma_{\text{bit}}$}
		\State $\bm{x}\leftarrow\genmodel_\ell(\bm{p},\bm{t})$
		\EndWhile
		\State $\bm{m}\leftarrow\bm{m}||\bm{x}$, $~\bm{t}\leftarrow\bm{t}||\bm{x}$, $\bm{\sigma}_{\textup{prev}}\leftarrow\bm{\sigma}_{\textup{prev}}||\sigma_{\text{bit}}$
		\EndWhile
		\EndWhile
		\If{$|\bm{t}|<n$}
		\State $\bm{t}\leftarrow\bm{t}||\genmodel_{n-|\bm{t}|}(\bm{p},\bm{t})$
		\EndIf
		\State \Return $\bm{t}$
	\end{algorithmic}
\end{algorithm}
\begin{algorithm}[h!]\small
	\caption{$\detect(\pp,\wpk_i,\dsk_{j'},\{\dpk_j\}_{j \in S},\bm{t})$ \label{alg:Det} }
	\begin{algorithmic}[1]
		\State $\spk_i\leftarrow \wpk_i$, $\vsk_{j'}\leftarrow\dsk_{j'}$, $\{\vpk_j\}_{j \in S}\leftarrow\{\dpk_j\}_{j \in S}$
		
		\For{$\mu=1,2,\cdots,n-(\ell+\ell\cdot\textsf{len}_{\text{sig}})$}
		\State $\widetilde{m}\leftarrow H_1(\bm{t}[\mu:\mu+\ell-1])$,  $\bm{\sigma}\leftarrow\epsilon$, 
		$\bm{m}\leftarrow\epsilon$
		
		\For{$\varphi=1,2,\cdots,\textsf{len}_{\text{sig}}$}
		\State $\bm{m}\leftarrow\bm{m}||\bm{t}[\mu+\varphi\cdot\ell:\mu+\varphi\cdot\ell+\ell-1]$
		
		\State $\bm{\sigma}\leftarrow\bm{\sigma}||H_3(\bm{m}||\bm{\sigma})$
		\EndFor
		
		\State $\hat{\bm{\sigma}}\leftarrow \bm{\sigma} \oplus H_2(\bm{t}[\mu:\mu+\ell-1])$
		
		\If{$\verify(\pp,\spk_i,\vsk_{j'},\{\vpk_j\}_{j \in S},\widetilde{m},\hat{\bm{\sigma}})=1$}
		\State \Return 1
		\EndIf
		\EndFor
		\State \Return 0
	\end{algorithmic}
\end{algorithm}


\vspace{1mm}
Now, we show that our proposed $\mddw$ satisfies the required security properties defined in Sec. \ref{sec:MDDW_def}. 
Formally, we have the following theorem.

\begin{theorem}\label{thm:MDDW_required_security} 
The constructed MDDW scheme $\mddw$ possesses the following properties:
\begin{itemize}
	\item \emph{(Completeness).} If  Assumption \ref{assumption} holds and  $\mdvs$ satisfies correctness, then $\mddw$ is  $\ell$-complete. 
	\item \emph{(Consistency).} If the underlying $\mdvs$ is consistent, then  $\mddw$ is also  consistent. 
	\item \emph{(Soundness).} If the underlying $\mdvs$ is   unforgeable, then $\mddw$ is $\ell$-sound. 
	\item \emph{(Distortion-freeness).} 	If   Assumption \ref{assumption} holds, then $\mddw$ is $\ell$-distortion-free. 
	\item \emph{(Robustness).} 	$\mddw$ is $(2\textsf{\upshape{len}}_{\text{\upshape{sig}}}+2)\ell$-robust. 
	\item \emph{(Off-the-record for designated set).} If the underlying $\mdvs$ is off-the-record for designated set, then $\mddw$ is also off-the-record for designated set. 
\end{itemize}
\end{theorem}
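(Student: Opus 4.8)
The plan is to establish the six items one at a time, grouped by technique. \textbf{Completeness} and \textbf{robustness} follow by unwinding Algorithms~\ref{alg:Watm}--\ref{alg:Det} together with correctness of $\mdvs$ (and, for completeness, Assumption~\ref{assumption}). \textbf{Consistency}, \textbf{soundness} and \textbf{off-the-record for designated set} are black-box reductions to the matching property of $\mdvs$; in each, the simulator $\B$ answers the MDDW oracles of Fig.~\ref{fig:MDDW_oracle} by relaying to the MDVS oracles of Fig.~\ref{fig:MDVS_oracle} under the identifications watermarking-keys $\leftrightarrow$ signer-keys, detection-keys $\leftrightarrow$ verifier-keys, $\orawatm\leftrightarrow\orasign$, $\oradet\leftrightarrow\oraver$, while lazily sampling $H_1,H_2,H_3$ itself; because $\B$ never needs to hold $\ssk_{i^*}$ or $\vsk_{j'}$ for the target indices (signing goes through $\orasign$, verdicts through $\oraver$), the corruption restrictions of the two games stay in sync. \textbf{Distortion-freeness} is a self-contained statistical argument in the ROM that uses only Assumption~\ref{assumption}.

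For \textbf{completeness}: if $\bm t\leftarrow\watm(\pp,\wsk_i,\{\dpk_j\}_{j\in S},\bm p)$ and $|\bm t|\geq\ell$, then starting at offset $1$ the text holds a signed block $\bm t[1:\ell]$ followed by $\textsf{len}_{\text{sig}}$ embedding blocks that satisfy exactly the $H_3$-equations $\detect$ re-checks while rebuilding $\bm\sigma$; un-masking with $H_2(\bm t[1:\ell])$ restores the original signature $\hat{\bm\sigma}$ on $\widetilde m=H_1(\bm t[1:\ell])$, so $\verify$ accepts by correctness of $\mdvs$. The only failure mode is an inner rejection-sampling loop running super-polynomially long, which Assumption~\ref{assumption} bounds by $\negl$. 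For \textbf{robustness}: the watermarked prefix is a concatenation of $(\textsf{len}_{\text{sig}}+1)\ell$-token ``units'' (a signed block plus its $\textsf{len}_{\text{sig}}$ embedding blocks) and ends exactly at a unit boundary, with at most one extra unit's worth of filler afterwards; hence any preserved window of length $(2\textsf{len}_{\text{sig}}+2)\ell=2(\textsf{len}_{\text{sig}}+1)\ell$ contains one complete unit verbatim, and running $\detect$ on $\bm t$ at the offset where that unit sits reconstructs the exact original message/signature pair, which $\verify$ accepts for every $j\in S^*$ by correctness of $\mdvs$; no hardness assumption is needed. For \textbf{consistency}: since $\detect$ accepts iff \emph{some} offset verifies, a text $\bm t^*$ on which uncorrupted detectors $j_0,j_1$ disagree has a single offset $\mu$ where $\verify$ accepts with $\vsk_{j_0}$ but not $\vsk_{j_1}$; recomputing $(\widetilde m_\mu,\hat{\bm\sigma}_\mu)$ from $\bm t^*$ via the simulated oracles and outputting $(i^*,S^*,\widetilde m_\mu,\hat{\bm\sigma}_\mu)$ breaks MDVS consistency. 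For \textbf{soundness} (with $k=\ell$): a winning $\bm t^*$ is accepted by some uncorrupted $j'$ at an offset $\mu$, and $\B$ outputs $(i^*,S^*,\widetilde m_\mu,\hat{\bm\sigma}_\mu)$ as its forgery; the point to check is that $\widetilde m_\mu=H_1(\bm t^*[\mu:\mu+\ell-1])$ was never queried to $\orasign$ under $(i^*,S^*)$, which holds because the $\textsf{NOLap}_\ell$ winning condition forces $\bm t^*[\mu:\mu+\ell-1]$ to differ from every $\ell$-block used as a message while answering $\orawatm(i^*,S^*,\cdot)$, so equality of $H_1$-images would be an $H_1$-collision ($\negl$ in the ROM).

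\textbf{Distortion-freeness} is the technical core and, I expect, the main obstacle. It must hold under Assumption~\ref{assumption} alone, even when MDVS signatures are far from uniform, because each signature is one-time-padded: $\bm\sigma=\hat{\bm\sigma}\oplus H_2(\bm t[-\ell:])$, and the keying block $\bm t[-\ell:]$ is freshly produced by $\genmodel_\ell$ with $\mathbf{H}_{\infty}\geq\alpha$, so except with probability $\mathrm{poly}(\lambda)\cdot 2^{-\alpha}$ (a union bound over all challenge queries, including birthday collisions among keying blocks) it is a point where $H_2$ is still undefined; hence each embedded $\bm\sigma$ is fresh-uniform and independent of the adversary's current view. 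The heart of the argument is then a per-block lemma in the spirit of the undetectability proof of \cite{CGZ23}: if $b^\star$ is produced by rejection-sampling $b_1,b_2,\dots\leftarrow\mathcal D$ for any $\mathcal D$ with $\mathbf{H}_{\infty}(\mathcal D)\geq\alpha$ until a random-oracle bit equals an independent uniform target, then the joint distribution of $b^\star$ and all oracle answers is within statistical distance $O(2^{-\alpha})$ of the one in which $b^\star\leftarrow\mathcal D$ is drawn fresh --- the event ``$H_3(\cdots\|b^\star)=\mathrm{target}$'' being the only residual correlation, and the target being uniform and hidden. Chaining this over the $\textsf{len}_{\text{sig}}$ embedding blocks of each unit, over all units, and over all $\mathcal{O}_{\text{\emph{M-chl}}}^{(b)}$ queries --- using the $\textsf{NOLap}_\ell$ restriction on $\oradet$ and the separation of challenge indices to keep every challenge block away from the adversary's explicit $\oradet$, $\mathcal{O}_{\textsf{Model}}$ and random-oracle queries --- yields $\advdistor\leq\negl$. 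Carrying the bookkeeping through the adaptive multi-query game (tracking precisely which random-oracle points must still be undefined when each block is drawn) is the delicate part.

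Finally, for \textbf{off-the-record for designated set}, I would take $\forgeds$ (for $\mddw$) to run Algorithm~\ref{alg:Watm} unchanged except that every call $\hat{\bm\sigma}\leftarrow\sign(\pp,\ssk_i,\{\vpk_j\}_{j\in S},H_1(\bm t[-\ell:]))$ is replaced by $\hat{\bm\sigma}\leftarrow\forgeds^{\mdvs}(\pp,\spk_i,\{\vsk_j\}_{j\in S},H_1(\bm t[-\ell:]))$ --- possible because the MDDW forger is handed $\{\dsk_j\}_{j\in S}=\{\vsk_j\}_{j\in S}$. A distinguisher $\A$ for $\gameotrdsmddw$ then yields a distinguisher $\B$ for $\gameotrdsmdvs$ that services $\A$'s challenge queries by running the $\watm/\forgeds$ skeleton and pulling every needed signature from its own challenge oracle $\mathcal{O}_{\text{\emph{otr-chl}}}^{(b)}(i,S,\cdot)$. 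The subtlety is a $\oradet$ query on a text $\bm t'$ that embeds, at some offset, a signature $\B$ had obtained from its challenge oracle: forwarding the matching $\oraver$ query is forbidden in $\gameotrdsmdvs$, but it is also unnecessary, since both a genuine signature and a $\forgeds^{\mdvs}$ output are accepted by $\verify$ for the designated verifiers (otherwise an off-the-record distinguisher holding all verifier keys would win), so $\B$ answers that detection query from its own records. Combining the six items establishes Theorem~\ref{thm:MDDW_required_security}.
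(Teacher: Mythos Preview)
Your plan follows the paper's proof essentially step for step: the same reductions (MDDW oracles relayed to MDVS oracles via the simulated Algorithms~\ref{alg:SimWatm} and~\ref{alg:SimDet}), the same $\forgeds$ for $\mddw$ (Algorithm~\ref{alg:Forgeds}), the same direct arguments for completeness and robustness, and the same two-step distortion-freeness argument (first isolate the event ``adversary has queried $H_2$ on a fresh keying block,'' then close the gap between rejection-sampled and plain $\genmodel$ tokens). Two small deltas worth noting: your per-block statistical bound should be $O(2^{-\alpha/2})$, not $O(2^{-\alpha})$---the paper obtains this via the leftover-hash lemma (Lemma~\ref{lem:leftover_hash}), and the square-root loss is the right order since the residual bias after averaging the uniform target is $|\Pr_{\mathcal D}[H_3(\cdot)=0]-\tfrac12|$, whose expectation over the random oracle is $\Theta(2^{-\alpha/2})$; and in the off-the-record reduction you correctly flag (and propose a patch for) a forbidden-$\oraver$-query issue that the paper's write-up does not discuss at all (it simply asserts ``perfect simulation''), so on that item your treatment is in fact more careful than the source, though making your patch fully rigorous when the extracted $\hat{\bm\sigma}$ coincides with a challenge signature but the extracted message does not would still take extra work.
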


\begin{remark}
	If the underlying MDVS scheme $\mdvs$  meets strong unforgeability, $\mddw$ might achieve  $(\textsf{len}_{\text{sig}}+1)\ell$-soundness. 
\end{remark}

For the optional off-the-record property for any subset, we have the following theorem. 
\begin{theorem}[Off-the-record for any subset]\label{thm:OTR_as}
	If $\mdvs$ is off-the-record for any subset, then $\mddw$ is also off-the-record for any subset. 
\end{theorem}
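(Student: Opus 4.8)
The plan is to construct the PPT forger $\forgeas$ required for $\mddw$ out of the PPT forger granted by the off-the-record-for-any-subset property of the underlying $\mdvs$ (Def.~\ref{def:mdvs_otr_as}), which I will write as $\forgeas_{\mdvs}$, and then reduce $\gameotrasmddw$ to $\gameotrasmdvs$. Concretely, on input $(\pp,\wpk_i,\{\dpk_j\}_{j\in S},\{\dsk_j\}_{j\in\scor},\bm{p})$ the MDDW forger runs Algorithm~\ref{alg:Watm} verbatim, with the single change that the line computing $\hat{\bm{\sigma}}\leftarrow\sign(\pp,\ssk_i,\{\vpk_j\}_{j\in S},H_1(\bm{t}[-\ell:]))$ becomes $\hat{\bm{\sigma}}\leftarrow\forgeas_{\mdvs}(\pp,\spk_i,\{\vpk_j\}_{j\in S},\{\vsk_j\}_{j\in\scor},H_1(\bm{t}[-\ell:]))$, under the identifications $\wpk_i=\spk_i$, $\dpk_j=\vpk_j$, $\dsk_j=\vsk_j$ fixed by the construction; every other step (block generation by $\genmodel_\ell$, the XOR masking by $H_2$, the $H_3$-driven rejection sampling, the final padding) is left untouched, and the forger consumes only $\{\dsk_j\}_{j\in\scor}$ and public data, as Def.~\ref{def:mddw_otr_as} requires. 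The reason this works cleanly is that $\watm$ calls $\sign$ exactly once per block, always on a message of the form $H_1(\bm{t}[-\ell:])$, so the whole gap between $\watm$ and this forger is confined to those $\sign$-versus-$\forgeas_{\mdvs}$ invocations, which is exactly what the MDVS challenge oracle $\mathcal{O}^{(b)}_{\text{otr-chl}}$ switches between.

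For the reduction, given a PPT adversary $\A$ against $\gameotrasmddw$ I would build $\B$ against $\gameotrasmdvs$ that hands $\pp$ to $\A$, keeps $H_1,H_2,H_3$ as lazily-sampled random oracles, and answers $\A$'s oracles thus: the key-generation oracles $\orawk,\oradk,\orawpk,\oradpk$ are answered by forwarding to $\B$'s own $\orask,\oravk,\oraspk,\oravpk$; on a challenge query $(i,S,\scor,\bm{p})$, $\B$ executes Algorithm~\ref{alg:Watm} step by step, but every time a signature $\hat{\bm{\sigma}}$ on $m=H_1(\bm{t}[-\ell:])$ is needed it obtains $\hat{\bm{\sigma}}$ from its own challenge oracle $\mathcal{O}^{(b)}_{\text{otr-chl}}(i,S,\scor,m)$ and proceeds; on a detection query $(i',j',S',\bm{t}')$, $\B$ simply runs Algorithm~\ref{alg:Det} using $\dsk_{j'}$ obtained from $\oravk(j')$; finally $\B$ echoes $\A$'s guess. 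When $\B$'s hidden bit is $0$ the texts returned to $\A$ are distributed exactly as $\watm(\pp,\wsk_i,\{\dpk_j\}_{j\in S},\bm{p})$, and when it is $1$ they are distributed exactly as the MDDW forger just described; every key-generation and detection oracle is simulated perfectly; therefore $\B$'s winning probability equals $\A$'s, yielding $\advotrasmddw=\advotrasmdvs\le\negl$.

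The step I expect to need the most care — the main obstacle — is verifying that $\B$ is a legitimate adversary in $\gameotrasmdvs$, i.e. that its simulation never trips a legality constraint of that game. The key point is that the legality conditions of $\gameotrasmddw$ already pin $\A$ down sufficiently: conditions (1)--(3) on challenge queries together with the $\oradk$-restriction of $\gameotrasmddw$ — with the internal $\oradk(j')$ call inside $\oradet$ read as itself subject to that restriction, which is the natural reading since $\oradet$ uses the \emph{secret} detection key $\dsk_{j'}$ whereas the challenge oracle touches only public keys — force $j'$ never to be a designated-but-uncorrupted detector of any challenge query, so every $\oravk(j')$ that $\B$ issues while simulating $\oradet$ is allowed, and the $\mathcal{O}^{(b)}_{\text{otr-chl}}$ queries $\B$ induces inherit conditions (1), (2), and (3) directly from $\A$'s challenge query; in particular $\B$ never invokes the $\oraver$ oracle, so the ``no challenge signature'' restriction on $\oraver$ is vacuous for $\B$. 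All of this bookkeeping mirrors the off-the-record-for-designated-set case inside Theorem~\ref{thm:MDDW_required_security}; the only genuinely new part is threading the extra parameter $\scor$ into $\forgeas_{\mdvs}$ and confirming that the additional any-subset constraints carry over verbatim between the two games.
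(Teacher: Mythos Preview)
Your construction of the MDDW forger $\forgeas$ matches the paper exactly (their Algorithm~\ref{alg:Forgeas}), and the overall reduction is structured the same way.  The one substantive divergence is in how you simulate $\oradet$: you propose fetching $\dsk_{j'}$ via $\oravk(j')$ and then running Algorithm~\ref{alg:Det} directly, whereas the paper simulates detection through the MDVS \emph{verification} oracle $\oraver$ (their $\textsf{SimDetect}$, Algorithm~\ref{alg:SimDet}).

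This difference is where your argument becomes fragile.  Your legality check rests on reading the $\oradk$-restriction in $\gameotrasmddw$ as also binding the challenger's internal $\oradk(j')$ call inside $\oradet$.  That reading is non-standard: restrictions stated as ``all queries $j$ to $\oradk$'' conventionally constrain $\A$'s direct calls, not the challenger's bookkeeping.  Under the usual reading, nothing in $\gameotrasmddw$ prevents $\A$ from issuing $\oradet(i',j',S',\bm{t}')$ with $j'\in S\setminus\scor$ for some earlier challenge $(i,S,\scor,\bm{p})$---the only constraint on $\oradet$ concerns $\bm{t}'$---and then your $\B$ would be forced into an illegal $\oravk(j')$.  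The paper avoids the issue entirely: $\oraver$ in $\gameotrasmdvs$ carries no restriction on $j'$ at all (only on the signature argument $\sigma'$), so the reduction stays legal no matter which verifier $\A$ asks about, and $\B$ never touches $\oravk$ beyond what $\A$ itself requests.  Swapping your $\oravk$-based simulation of $\oradet$ for an $\oraver$-based one fixes the gap and brings you in line with the paper.
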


%
%
%

The proofs of Theorem \ref{thm:MDDW_required_security} and Theorem \ref{thm:OTR_as} are placed in Appendix \ref{sec:security_proof_append} and Appendix \ref{sec:security_proof_append_otr_as}, respectively.




\subsection{MDDW construction with claimability}\label{sec:MDVS_claimaibility}
To make the above generic MDDW scheme  achieve claimability, the underlying MDVS scheme needs to meet some corresponding  security property. However, to the best of our knowledge, no claimability notion for MDVS has been introduced before. 
Here, we firstly introduce the notion of claimability  for MDVS, and then formally prove that if   the underlying MDVS scheme meets claimability, the above generic MDDW scheme is claimable. 
\vspace{1mm}

\noindent\underline{\textbf{Claimability for MDVS.}} The notion of claimability  for MDVS extends from the  established claimability  concepts for ring signature \cite{PS19} and  designated-verifier signature \cite{yamashita2023designated}. It is crucial to emphasize that in this paper, claimability is defined as an optional requirement for MDVS. 



\begin{definition}[Claimability for MDVS] 
	We say that $\mdvs=(\setup,\skg,$ $\vkg,\sign,\verify)$ is \emph{claimable}, if there are two PPT algorithms $\cla$ and $\claver$ (where $\cla$ takes  $(\pp,\ssk_i,\{\vpk_j\}_{j\in S},\sigma)$ as input and outputs a claim $\pi$, and  $\claver$ takes  $(\pp,\spk_i,\{\vpk_j\}_{j\in S},\sigma,\pi)$ as input and outputs a bit), such that 
	\begin{enumerate}
		\item  for any signer $i$, any message $m \in \M$, and any verifier identity  set $S$, 	
		\begin{equation*}\small
			\Pr \left[
			\begin{aligned}
				&\pp\leftarrow \setup(1^{\lambda})\\
				&(\spk_i,\ssk_i) \leftarrow \skg(\pp)\\
				&\{(\vpk_j,\vsk_j) \leftarrow \vkg(\pp)\}_{j\in S}\\
				&\sigma \leftarrow \sign(\pp,\ssk_i,\{\vpk_j\}_{j \in S}, m)\\
				&\pi\leftarrow\cla(\pp,\ssk_i,\{\vpk_j\}_{j\in S},\sigma)
			\end{aligned}
			~\textup{\Large{:}}~
			\begin{aligned}
				&\claver(\pp,\spk_i,\{\vpk_j\}_{j \in S}, \sigma,\pi) \neq 1
			\end{aligned}
			\right] = 0.
		\end{equation*}
		\item for any PPT adversary $\A=(\A_1,\A_2)$,  \[\advclaimunfmdvs=\Pr[\gameclaimunfmdvs=1] \leq \negl \] where $\gameclaimunfmdvs$ is shown in Fig. \ref{fig:mdvs_claim}.
		\item for any PPT adversary $\A$,  \[\advunframemdvs=\Pr[\gameunframemdvs=1]\leq \negl \] where $\gameunframemdvs$ is shown in Fig. \ref{fig:mdvs_claim}.
	\end{enumerate}
	\label{def:mdvs_claim}
\end{definition}

\begin{figure}[t!]\small
	\centering
	\fbox{\parbox{\linewidth}{\shortstack[l]{
				\underline{$\gameclaimunfmdvs$:}\\
				$\pp \leftarrow \setup(1^{\lambda})$\\
				$(i^*,S^*,m^*,st)\leftarrow \A_1^{\orask,\oravk,\oraspk,\oravpk,\orasign,\oraver,\oraclm}(\pp)$\\
				$\qquad$where  $\orask$ is never queried on $i^*$\\ 
				$\sigma^*\leftarrow\sign(\pp,\ssk_{i^*},\{\vpk_j\}_{j\in S^*},m^*)$ \\
				$\qquad$where all keys are the  honestly generated outputs of  the key generation   oracles\\
				$(\pi^*,i')\leftarrow \A_2^{\orask,\oravk,\oraspk,\oravpk,\orasign,\oraver,\oraclm}(\sigma^*,st)$\\
				$\qquad$where all queries $i$ to $\orask$   satisfy $i\neq i^*$, and all queries $(*,*,\sigma)$ to $\oraclm$ satisfy \\
				$\qquad$$\qquad$$\sigma\neq \sigma^*$\\
				If $(\claver(\pp,\spk_{i'},\{\vpk_j\}_{j\in S^*},\sigma^*,\pi^*)=1)\wedge(i'\neq i^*)$, then return $1$\\
				Return $0$				
	}		}}
	\fbox{\parbox{\linewidth}{\shortstack[l]{
				\underline{$\gameunframemdvs$:}\\
				$\pp \leftarrow \setup(1^{\lambda})$\\
				$(i^*,S^*,m^*,\sigma^*,\pi^*)\leftarrow \A^{\orask,\oravk,\oraspk,\oravpk,\orasign,\oraver,\oraclm}(\pp)$\\
				$\qquad$where $\orask$ is never queried on $i^*$, and $\oraclm$ is never queried on  $(i^*,S^*,\sigma^*)$ \\
				If $\claver(\pp,\spk_{i^*},\{\vpk_j\}_{j\in S^*},\sigma^*,\pi^*)=1$, then return 1 \\ 
				Return $0$				
	}		}}
	\fbox{\parbox{\linewidth}{\shortstack[l]{
				\textbf{Claiming Oracle} $\oraclm(i,S,\sigma)$:\\
				$\qquad$(1) $(\spk_i,\ssk_i)\leftarrow\orask(i)$, $\{\vpk_j\leftarrow \oravpk(j)\}_{j\in S}$.\\
				$\qquad$(2)   Output $\pi\leftarrow\cla(\pp,\ssk_i,\{\vpk_j\}_{j\in S},\sigma)$.
	}		}}
	\captionsetup{justification=centering}
	\caption{Games  $\gameclaimunfmdvs$ and $\gameunframemdvs$ for MDVS}
	\label{fig:mdvs_claim}
\end{figure}  

\begin{remark}
In the above definition, the adversary $\A$ succeeds in game $\gameunframemdvs$ if  its output tuple $(i^*,S^*,m^*,\sigma^*,\pi^*)$ satisfies  $\claver(\pp,\spk_{i^*},\{\vpk_j\}_{j\in S^*},\sigma^*,\pi^*)$ $=1$.  It's worth noting that unlike the prerequisites in the definition of claimability for ring signatures \cite{PS19} and designated-verifier signatures \cite{yamashita2023designated}, our third condition doesn't mandate that ``$\sigma^*$ is accepted by some designated verifier in $S^*$''. Thus, the security requirement in our claimability for MDVS appears to be more stringent. 
\end{remark}

\noindent\underline{\textbf{Generic MDDW scheme with claimability.}} 
For the claimability of our generic MDDW scheme, we have the following theorem. 

\begin{theorem}[Claimability]\label{thm:claimability}
	If the underlying MDVS scheme $\mdvs$ is claimable, then the generic MDDW scheme $\mddw$ in Sec. \ref{sec:generic_construction} is  $(\textsf{\upshape{len}}_{\text{\upshape{sig}}}+1)\ell$-claimable. 
\end{theorem}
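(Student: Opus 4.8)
\emph{Plan.} The idea is to build the claiming algorithms $(\cla,\claver)$ of $\mddw$ directly on top of the claiming algorithms $(\cla_{\text{mdvs}},\claver_{\text{mdvs}})$ supplied by the claimability of $\mdvs$ (Def.~\ref{def:mdvs_claim}). Recall that a text produced by $\watm$ (Algorithm~\ref{alg:Watm}) embeds one or more \emph{units}, each spanning exactly $(\textsf{len}_{\text{sig}}+1)\ell=k$ consecutive tokens: an $\ell$-token message block followed by $\textsf{len}_{\text{sig}}$ blocks of $\ell$ tokens encoding the bits of $\hat{\bm{\sigma}}\oplus H_2(\cdot)$. The extraction step inside $\detect$ (Algorithm~\ref{alg:Det}) is a deterministic function of $\bm{t}$ and the random oracles, and when applied to a window aligned with an embedded unit it recovers exactly the MDVS signature that $\watm$ generated there (this is the computation already used to prove completeness in Theorem~\ref{thm:MDDW_required_security}). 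I therefore set $\cla(\pp,\wsk_i,\{\dpk_j\}_{j\in S},\bm{t})$ to run the $\detect$-extraction over all windows, find the first index $\mu$ whose extracted signature $\hat{\bm{\sigma}}$ admits a self-verifying MDVS claim $\pi_{\text{mdvs}}\leftarrow\cla_{\text{mdvs}}(\pp,\ssk_i,\{\vpk_j\}_{j\in S},\hat{\bm{\sigma}})$, and output $\pi=(\mu,\pi_{\text{mdvs}})$; and $\claver(\pp,\wpk_i,\{\dpk_j\}_{j\in S},\bm{t},(\mu,\pi_{\text{mdvs}}))$ re-extracts $\hat{\bm{\sigma}}$ from window $\mu$ of $\bm{t}$ and returns $\claver_{\text{mdvs}}(\pp,\spk_i,\{\vpk_j\}_{j\in S},\hat{\bm{\sigma}},\pi_{\text{mdvs}})$. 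Correctness (the first requirement of Def.~\ref{def:mddw_claim}) is then immediate: when $\bm{t}\leftarrow\watm(\cdots,\bm{p})$, window $\mu=1$ is aligned with the first embedded unit, so $\cla$ extracts precisely the signature $\hat{\bm{\sigma}}$ that $\watm$ output via $\sign(\pp,\ssk_i,\{\vpk_j\}_{j\in S},\cdot)$, and claim-correctness of $\mdvs$ gives $\claver_{\text{mdvs}}(\pp,\spk_i,\{\vpk_j\}_{j\in S},\hat{\bm{\sigma}},\pi_{\text{mdvs}})=1$.

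For the second requirement — no adversary can claim a text it did not produce — I would reduce to $\gameclaimunfmdvs$. The reduction $\B$ plays $\gameclaimunfmddw$ against $\A=(\A_1,\A_2)$, simulating each watermarking/detection/claiming oracle via the corresponding MDVS oracle ($\orask$ for $\orawk$, $\oravk$ for $\oradk$, and $\orasign/\oraver/\oraclm$ to answer $i^*$-indexed $\orawatm/\oradet/\oraclm$ queries, which is what the ``$\orask$ never queried on $i^*$'' restriction permits), while simulating $H_1,H_2,H_3$ itself. After $\A_1$ returns $(i^*,S^*,\bm{p}^*,st)$, $\B$ guesses the index $q^*$ of the embedded unit that $\A_2$'s eventual claim will point to (there are only $\poly$ many such units), runs $\watm(\pp,\wsk_{i^*},\{\dpk_j\}_{j\in S^*},\bm{p}^*)$ faithfully, obtains the signature of every unit $q\neq q^*$ from $\orasign(i^*,S^*,\cdot)$, and for unit $q^*$ submits its message block $H_1(\cdot)$ to its own challenger and plants the returned challenge signature $\sigma^*$ there; the resulting $\bm{t}^*$ has exactly the right distribution. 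If $\A_2$ outputs $(\pi^*=(\mu,\pi^*_{\text{mdvs}}),i')$ with $i'\neq i^*$, $\claver(\pp,\wpk_{i'},\{\dpk_j\}_{j\in S^*},\bm{t}^*,\pi^*)=1$, and $\mu$ is the window of unit $q^*$ (guess correct), then $\B$ forwards $(\pi^*_{\text{mdvs}},i')$ and wins. Two points need care: the guess is right with probability $1/\poly$, costing a polynomial factor; and the MDDW restriction that $\A$'s $\oraclm$ calls act on texts with $\textsf{NOLap}_k(\bm{t},\bm{t}^*)=1$ — i.e. containing no complete $k$-token unit of $\bm{t}^*$ — must translate into $\B$ never querying $\oraclm_{\text{mdvs}}$ on $\sigma^*$. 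This holds up to a negligible term: modeling $H_2,H_3$ as random oracles, the signature extracted from any window of such a $\bm{t}$ equals $\sigma^*$ only if $\A$ forced a collision among fresh random-oracle outputs, on which event $\B$ aborts. Hence $\advclaimunfmddw\le\poly\cdot\advclaimunfmdvs+\negl$.

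For the third requirement — non-frameability — the reduction to $\gameunframemdvs$ is more direct and needs no guessing. $\B$ simulates $\gameunframemddw$ exactly as above; when $\A$ halts with $(i^*,S^*,\bm{t}^*,\pi^*=(\mu,\pi^*_{\text{mdvs}}))$ such that $\claver(\pp,\wpk_{i^*},\{\dpk_j\}_{j\in S^*},\bm{t}^*,\pi^*)=1$ (the extra ``some designated detector accepts $\bm{t}^*$'' condition only restricts $\A$ and can be ignored by $\B$), $\B$ re-extracts the message block $\widetilde{m}$ and signature $\hat{\bm{\sigma}}$ from window $\mu$ of $\bm{t}^*$ and outputs $(i^*,S^*,\widetilde{m},\hat{\bm{\sigma}},\pi^*_{\text{mdvs}})$; since by our definition of $\claver$ this equality means $\claver_{\text{mdvs}}(\pp,\spk_{i^*},\{\vpk_j\}_{j\in S^*},\hat{\bm{\sigma}},\pi^*_{\text{mdvs}})=1$, it is a winning tuple for $\gameunframemdvs$ provided $\B$ never queried $\oraclm_{\text{mdvs}}$ on $(i^*,S^*,\hat{\bm{\sigma}})$. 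The latter follows because the MDDW game forbids $\oraclm$ calls on $i^*$ for any text overlapping $\bm{t}^*$ in a $k$-token window, hence for any text from which window $\mu$'s unit — and thus $\hat{\bm{\sigma}}$ — could be extracted, again up to a negligible random-oracle accident; so $\advunframemddw\le\advunframemdvs+\negl$.

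The main obstacle is the second requirement: $\gameclaimunfmdvs$ challenges a \emph{single} MDVS signature on an adversarially chosen message, whereas a watermarked text may carry several embedded units, so the reduction must commit in advance to the attacked unit and thread the MDVS challenge through the $\watm$ execution; making the $\oraclm$-oracle restrictions of the two games line up — where the random-oracle modeling of $H_2,H_3$ and the choice $k=(\textsf{len}_{\text{sig}}+1)\ell$ equal to one unit length are both essential — is the delicate part. The remaining two requirements are comparatively routine given the structure of Algorithms~\ref{alg:Watm} and~\ref{alg:Det} and the claimability of $\mdvs$.
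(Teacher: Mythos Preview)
Your approach coincides with the paper's: build $\cla/\claver$ on top of $\cla_{\text{mdvs}}/\claver_{\text{mdvs}}$ by extracting the embedded MDVS signatures from $(\textsf{len}_{\text{sig}}+1)\ell$-token windows, reduce condition (1) to MDVS claim-correctness, reduce condition (2) to $\gameclaimunfmdvs$ via a guess-the-unit-and-plant-the-challenge argument, and reduce condition (3) directly to $\gameunframemdvs$ with no guessing. The only difference is packaging. The paper's $\cla$ (Algorithm~7) outputs a \emph{vector} $(\pi_1,\dots,\pi_N)$ with one MDVS claim per \emph{aligned} unit, and its $\claver$ (Algorithm~8) iterates only over aligned unit boundaries and accepts if any $\claver_{\text{mdvs}}$ call does; you instead output a single pair $(\mu,\pi_{\text{mdvs}})$ and have $\claver$ re-extract at the specified $\mu$. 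Both give the same $1/\poly$ loss in condition (2), and your discussion of how the $\textsf{NOLap}_k$ restriction on $\oraclm$ translates into ``$\B$ never queries $\oraclm_{\text{mdvs}}$ on $\sigma^*$'' is in fact more explicit than the paper's proof, which simply asserts a perfect simulation.

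One caveat with your packaging: as written, $\claver$ accepts whatever $\mu$ the adversary supplies, so in $\gameclaimunfmddw$ the adversary can point to an \emph{unaligned} window of the honestly generated $\bm{t}^*$. The string extracted there is not any signature $i^*$ actually produced (it is a random-oracle-derived value), and nothing in the abstract MDVS claimability definition prevents a corrupted $i'$ from producing an accepting $\claver_{\text{mdvs}}$ proof for an arbitrary string; yet your reduction only plants $\sigma^*$ at an aligned unit and so does not cover this case. The paper sidesteps this by having $\claver$ scan only aligned unit boundaries. If you likewise restrict $\mu$ to the $\lfloor n/((\textsf{len}_{\text{sig}}+1)\ell)\rfloor$ aligned positions, the gap closes and your argument matches the paper's.
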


The proof of Theorem \ref{thm:claimability}  is  placed in Appendix \ref{sec:security_proof_append_claim}.

\subsection{Instantiation of claimable MDVS}
\label{sec:trans}
 To instantiate claimable MDVS (CMDVS), we show a transformation that converts any MDVS  into a CMDVS, with the help of   a standard digital  signature, a pseudorandom function and a commitment scheme.  Our method is inspired by \cite{PS19,yamashita2023designated}, where \cite{PS19} shows how to construct claimable ring signature and \cite{yamashita2023designated} shows how to construct claimable DVS from ring signature.

The intuition of our method is as follows. To generate a CMDVS signature, the signer firstly generates an MDVS signature $\sigma_{\mdvs}$ with the signing algorithm of the underlying MDVS,  signs $\sigma_{\mdvs}$ with the the standard signature scheme to obtain a standard signature $\sigma_{\sig}$,  
and then takes  the commitment scheme to commit $\sigma_{\sig}$, obtaining a commitment $\com$. The generated  CMDVS signature consists of  $(\sigma_{\mdvs},\com)$.  When making a  claim, the signer just opens the commitment $\com$, outputting    $\sigma_{\sig}$ and the randomness used to generate $\com$. To verify the claim, one  firstly checks if the opening is correct, and then checks if the standard signature is valid. The unforgeability of the standard signature scheme guarantees that the claim is indeed generated by the signer.  

The detailed construction of CMDVS is as follows.


Let $\mdvs=(\setup, \skg,\vkg,\sign,\verify)$ be an MDVS scheme. 
Let $\sig=(\setup,\kg,\sign,\verify)$ be a  signature scheme, $\prf=(\kg,\eval)$ be a pseudorandom function, and $\commit=(\setup,\Com,\dec)$ be a commitment scheme. The definitions of signature, pseudorandom function and commitment are given in Appendix \ref{app:preliminary} for completeness.


Our CMDVS scheme  $\cmdvs=(\setup,\skg,\vkg,\sign,\verify,\cla,$ $\claver)$ is shown in Fig. \ref{fig:algo_cmdvs}.


\begin{figure*}[!h]
    \small
	\centering

	\begin{tablenotes}
	\item When considering the same security level, it should hold that $l \approx |\Z|$. For simplicity, we assume that Scheme I works over $\G$, and Scheme II and III works over a bilinear map $e : \G \times \G \rightarrow \G_T$. Then, $\textsf{Sca}$ means one time of the scalar computation over $\G$, $\textsf{Exp}$ means one time of the exponential computation over $\G$, and $\textsf{bil}$ means one time of the bilinear map computation.
 		\end{tablenotes}
\end{table}

In Table \ref{table:comp}, it holds that $l \approx |\Z|$ when we consider the same security level. So, from the view of size of one DVS signature, it is clear that Scheme III outperforms the others. It means that given a fixed length of the output of the model, Scheme III can inserts more DVS signatures into the output of the model. Therefore, it achieves a stronger soundness.

Note that $\textsf{Sca}$ means one time of the scalar computation over $\G$, $\textsf{Exp}$ means one time of the exponential computation over $\G$, and $\textsf{bil}$ means one time of the bilinear map computation. From Table \ref{table:comp}, it is hard to determine which scheme would run in the least time, since it depends more on the actual implementation (a empirical comparison can be found in Sec. \ref{sec:evaluation}). However, if we only focus on Scheme II and Scheme III, we can conclude that Scheme III is more efficient than Scheme II, in terms of inserting one DVS signature and verifying one DVS signature, since Scheme III does not requires the XOR computation.

\subsection{Proof of Theorem \ref{theo:concrete_pdvs}}
\label{sec:proof_concrete_pdvs}

As analysis in the aforementioned section, we omit the analysis of the existential unforgeability and off-the-record property, since you can find it in \cite{steinfeld2003universal,LV04}. Therefore, we only focus on the pseudorandomness here.

\noindent \underline{\emph{Definition of Pseudurandomness.}} The definition of pseudorandomness is as follows.

\begin{definition}{\textbf{\textup{(Pseudorandomness).}}}
	We say that $\dvs$ is pseudorandom, if for all PPT adversary $\A$,  
	\[\advpseudodvs=|\Pr[\gamepseudodvs=1]-\frac{1}{2}| \leq \negl \] 
	where $\gamepseudodvs$ defined in Fig. \ref{fig:dvs_pseudorandom}.  
	\label{def:dvs_pseudo}
\end{definition}

\begin{figure}[t!]\small
	\centering
	\fbox{\shortstack[l]{
			\underline{$\gamepseudodvs$:}\\
			$\pp \leftarrow \setup(1^{\lambda})$, $b\leftarrow \{0,1\}$\\ 
			$b' \leftarrow \A^{\orask,\oravk,\oraspk,\oravpk,\mathcal{O}_{S\text{-}chl}^{(b)},\oraver}(\pp)$\\
			$\qquad$where $\orask,\oravk,\oraspk,\oravpk,\oraver$ are defined as in Fig. \ref{fig:MDVS_oracle} with  $|S|=1$, \\ $\qquad$$\mathcal{O}_{S\text{-}chl}^{(0)}(i,j,m)$ outputs $\orasign(i,j,m)$ ($\orasign$ is also defined in Fig. \ref{fig:MDVS_oracle} with input $S=\{j\}$),\\
   $\qquad\qquad$$\mathcal{O}_{S\text{-}chl}^{(1)}(i,j,m)$ outputs a uniformly sampled $\sigma\leftarrow\mathcal{SG}$,\\
				$\qquad$let  $Q$ denote the set of query-response (i.e., $(i,j,m,\sigma)$) of  $\mathcal{O}_{S\text{-}chl}^{(b)}$,\\
            $\qquad$all queries $j$ to $\oravk$ should satisfy  $(*,j,*,*)\notin Q$, \\
			$\qquad$all queries $(i,j,m,\sigma)$ to $\oraver$ should satisfy  $(i,j,m,\sigma)\notin Q$, \\
            $\qquad$and all queries $(i,j,m)$ to $\mathcal{O}_{S\text{-}chl}^{(b)}$ should satisfy  $j$ is not queried to $\oravk$. \\
			If $b'=b$, then return $1$,\\
			Return $0$
	}		}
	\captionsetup{justification=centering}
	\caption{Game for defining pseudorandomness of DVS}
	\label{fig:dvs_pseudorandom}
\end{figure}  

\noindent \underline{\emph{Proof for pseudorandomness.}} We prove the pseudorandomness in the random oracle model, so we add a random oracle $\mathcal{O}_{RO}$, which inputs a string and outputs a random string. We assume that every time calling signing algorithm $\sign$, $\sign$ would query $str=e(\vpk,h^{\ssk})$ to the random oracle $\mathcal{O}_{RO}$, where $h = \hashf_0(m,r)$.
Then, we prove the pseudorandomness of the DVS scheme in Fig. \ref{fig:algo_pdvs_two} in a sequence of games.

\noindent \underline{$\textbf{G}_0$}: On receiving the security parameter $\lambda$, the game initializes the public parameter $\pp \leftarrow \setup(1^{\lambda})$ and a bit $b\leftarrow \{0,1\}$. 

Then, the challenger answers the queries as follows.
\begin{itemize}
    \item $\mathcal{O}_{RO}(str)$: On receiving a string $str$, it finds $str$ in its table $L_{\textup{ro}}$. If $(str,y)$ in the table, then it returns $y$. Otherwise, it samples a random string $y$, inserts $(str,y)$ into $L_{\textup{ro}}$, and returns $y$.
    \item $\mathcal{O}_{SK}(i)$: On receiving an index $i$, 
    it calls $\skg(\pp)$ to generate a key pair $(\spk_i,\ssk_i)$, and store the key pair. Finally, it returns $(\spk_i,\ssk_i)$.
    \item $\mathcal{O}_{VK}(j)$: On receiving an index $j$, it proceeds as follows. 
    If $(*,j,*,*) \in Q$, where $Q$ is the query-response set of $\mathcal{O}_{S-chl}^{(b)}$, then it aborts. Otherwise, the oracle would return a key pair in the following way. If $j$ has been queried, then search $(\vpk_j,\vsk_j)$. Else, it calls $\vkg(\pp)$ to generate a key pair $(\vpk_j,\vsk_j)$, and store the key pair. Finally, it returns $(\vpk_j,\vsk_j)$.
    \item $\mathcal{O}_{SPK}(i)$: On receiving a index $i$, it calls $\mathcal{O}_{SK}(i)$ to obtain the key pair $(\spk_i,\ssk_i)$, and then returns $\spk_i$.
    \item $\mathcal{O}_{VPK}(j)$: On receiving a index $j$, it calls $\mathcal{O}_{VK}(j)$ to obtain the key pair $(\vpk_j,\vsk_j)$, and then returns $\vpk_j$.
    \item  $\mathcal{O}^{(b)}_{S-chl}(i,j,m)$: On receiving $(i,j,m)$, it proceeds as follows. If 
    $j$ has been queried to $\mathcal{O}_{VK}$, then it aborts. Otherwise,
    it calls $\sign(\pp,\ssk_i,\vpk_j,m)$ to generate the sigma $\sigma$, where $\ssk_i$ is output by $\mathcal{O}_{SK}(i)$ and $\vpk_j$ is output by $\mathcal{O}_{VPK}(j)$. Then it sets $Q \leftarrow \{(i,j,m,\sigma)\} \cup Q $ and returns $\sigma$.
    \item $\mathcal{O}_{V}(i,j,m,\sigma)$: On receiving $(i,j,m,\sigma)$, it proceeds as follows. If $(i,j,m,\sigma) \in Q$, then it aborts. Otherwise, it outputs $b \leftarrow \verify(\pp,\spk_i,\vsk_j,m,\sigma)$, where $\spk_i$ is output by $\mathcal{O}_{SPK}(i)$, $\vsk_{j}$ is output by $\mathcal{O}_{VK}(j)$.
\end{itemize}

Finally, $\A$ outputs a bit $b'$.

It is clear that $\textbf{G}_0$ is identical to $\gamepseudodvs$ when $b=0$ in Fig. \ref{fig:dvs_pseudorandom}. Thus we have \[\Pr[\gamepseudodvs=1|b=0]=\Pr[\textbf{G}_0=1].\]

\noindent \underline{$\textbf{G}_1$}: $\textbf{G}_1$ is similar to $\textbf{G}_0$, except when answering the $\mathcal{O}_{V}$ queries on $(i,j,m,\sigma)$, 
$\textbf{G}_1$ proceeds as follows.

\begin{itemize}
    \item $\mathcal{O}_{V}(i,j,m,\sigma)$: On receiving $(i,j,m,\sigma)$, it proceeds as follows. If $(i,j,m,\sigma) \in Q$, then it aborts. Otherwise, the oracle proceeds:
    \begin{itemize}
        \item If $i$ has not been queried to $\mathcal{O}_{SK}$ and $j$ has not been queried to $\mathcal{O}_{VK}$, then return $0$.
        \item If $i$ has been queried to $\mathcal{O}_{SK}$, then parse $(r,s) \leftarrow \sigma$, and compute $h=\hashf_0(m,r)$, $str=e(\vpk_j,h^{\ssk})$. If $str$ has not been queried to $\mathcal{O}_{RO}$, then return $0$. Otherwise, check if $s=L_{\textup(ro)}(str)$. If equal, return $1$, otherwise $0$.
        \item If $j$ has been queried to $\mathcal{O}_{VK}$, then parse $(r,s) \leftarrow \sigma$, and compute $h=\hashf_0(m,r)$, $str=e(\spk_k,h^{\vsk})$. If $str$ has not been queried to $\mathcal{O}_{RO}$, then return $0$. Otherwise, check if $s=L_{\textup(ro)}(str)$. If equal, return $1$, otherwise $0$.
    \end{itemize}
\end{itemize}

Let $\textsf{evt}$ denote the event, that for a query $(i,j,m,\sigma)$ to $\mathcal{O}_{V}$, it holds that $(i,j,m,\sigma) \not \in Q$ and $\verify(\pp,\spk_i,\vsk_k,m,\sigma)=1$, where $i$ has not been queried to $\mathcal{O}_{SK}$ by the adversary and $j$ has not been queried to $\mathcal{O}_{VK}$ by the adversary.
 
Then, we have \[|\Pr[\textbf{G}_1=1]-\Pr[\textbf{G}_0=1]|=\Pr[\textsf{evt}].\] If $\Pr[\textsf{evt}] \leq \negl$, then $|\Pr[\textbf{G}_1=1]-\Pr[\textbf{G}_0=1]| \leq \negl$. In the following, we show that $\Pr[\textsf{evt}] \leq \negl$.

\begin{itemize}
    \item[Case $1$]: If for all $(i',j',m')$ that have been queried to $\mathcal{O}^{(b)}_{S-chl}$, it holds $(i,j,m) \neq (i',j',m')$, then we say it breaks the existential unforgeability game, of which the successful probability is negligible.
    \item[Case $2$]: If for all $(i',j',m')$ that have been queried to $\mathcal{O}^{(b)}_{S-chl}$, there exists one (denoted as $(\tilde{i},\tilde{j},\tilde{m})$) such that $(i,j,m) = (\tilde{i},\tilde{j},\tilde{m})$, then it implies that $\sigma \neq \tilde{\sigma}$, where $\tilde{\sigma}$ is output by  $\mathcal{O}^{(b)}_{S-chl}(\tilde{m})$. We parse $\sigma=(r,s)$ and $\tilde{\sigma}=(\tilde{r},\tilde{s})$.
    \begin{itemize}
        \item If $r=\tilde{r}$, then it holds that $s \neq \tilde{s}$. However, since $(\tilde{i},\tilde{j},\tilde{m})$ is queried to $\mathcal{O}^{(b)}_{S-chl}$, which invokes $\sign$ to generate the sigma. It obtains $\tilde{s}$, by querying $\tilde{str}=e(\vpk_{\tilde{j}},(\tilde{h})^{\ssk_{\tilde{i}}})$ to random oracle $\mathcal{O}_{RO}$, where $\tilde{h}=\hashf_0(\tilde{m},\tilde{r})=\hashf_0(m,r)=h$. Thus, when the verification algorithm $\verify$ queries $str=e(\spk_i,h^{\vsk_{j}})=e(\vpk_j,h^{\ssk_i})=e(\vpk_{\tilde{j}},(\tilde{h})^{\ssk_{\tilde{i}}})=\tilde{str}$ to the random oracle $\mathcal{O}_{RO}$, it would get $s=\tilde{s}$, which is contradictory to $s \neq \tilde{s}$. Thus, the assumption is not held.
        \item If $r \neq \tilde{r}$, then it with overwhelming probability that $h=\hashf_0(m,r)$ is not equal to any $h'=\hashf_0(m',r')$, where $m'$ is the any message queried to $\mathcal{O}^{(b)}_{S-chl}$ and $r'$ is the corresponding output, since $\hashf_0$ is a collision-resistant hash function. Thus, $(\spk_i,\vpk_j,h) \neq (\spk_{i'},\vpk_{j'},h')$ where $(i',j',m')$ is queried to $\mathcal{O}^{(b)}_{S-chl}$. Note that $str=e(\vpk_j,h^{\ssk_i})$, so $(\spk_i,\vpk_j,h,str)$ is a DBDH tuple. If $str$ has not been queried to $\mathcal{O}_{RO}$, then it with negligible probability that $\hashf_1(str)=s$ is held, since the random oracle $\mathcal{O}_{RO}$ would return a random string for $str$. 
        Then, we analyze the probability that $A$ succeeds in querying $str$ to $\mathcal{O}_{RO}$, given $(\spk_i,\vpk_j,h)$.
        Supposing that $A$ can succeeds in query $str$ with non-negligible probability, we construct another adversary $\B$ to break the CBDH assumption.
        
        Here is the construction of $\B$.
        After receiving $(X,Y,Z)$ from the challenger of CBDH, the adversary $\B$ firstly answers the oracle queries from $\A$, mostly just as the challenger in $\textbf{G}_1$ does. 
        There are some minor difference
        \begin{itemize}
            \item Assuming that $\A$ issues at most $q_0$ queries to $\mathcal{O}_{S-chl}^{(b)}$, $\B$ chooses one of them $(\tilde{i},\tilde{j},\tilde{m})$, and sets $\spk_{\tilde{i}}=X$, $\vpk_{\tilde{j}}=Y$. To answer this query, $\B$ randomly chooses a random string $\tilde{r}$ and sets $\hashf_0(\tilde{r},\tilde{m}) = \tilde{h}=g^{\alpha}$ (here, $\hashf_0$ also serves as a random oracle). Then, $\B$ computes $\tilde{str}=e(\spk_{\tilde{i}},\vpk_{\tilde{j}})^{\alpha}$ and queries $\tilde{str}$ to the oracle $\mathcal{O}_{RO}$, obtaining $\tilde{s}$. Finally, $\B$ returns $\tilde{\sigma}=(\tilde{r},\tilde{s})$.
            \item When $\A$ issues queries to $\mathcal{O}_{RO}$ for $\hashf_0$, if there are some  queries (assume that they are $q_1$ such queries) in the form of $(*,\tilde{m})$, then $\B$ chooses one of them (e.g., $(r',\tilde{m})$) and sets $\hashf_0(r',\tilde{m})=Z$.
        \end{itemize}
        After $\A$ issues a query $(i,j,m,\sigma)$ to $\mathcal{O}_{V}$, which makes case 2 happens, if $(i,j,m)=(\tilde{i},\tilde{j},\tilde{m})$ and $\hashf_0(r,\tilde{m})=Z$ where $\sigma=(r,s)$, then $\B$ finds the query in $L_{\textup{ro}}$ such that $\hashf_1(str)=s$. Finally, $\B$ outputs $str$ as its output.
        That is construction of $\B$.

        Then, we analyze the probability of $\B$.
        \[\Pr[\B~\textup{succeeds}] \geq \frac{1}{q_0\cdot q_1} \Pr[\textup{Case 2 happens}].\]
        If $\Pr[\textup{Case 2 happens}]$ is non-negligible, then $\Pr[\B~\textup{succeeds}]$ is also non-negligible, which is contradictory to that CBDH is thought a hard problem. Thus, the assumption is not held, so $\Pr[\textup{Case 2 happens}]$ is negligible.

        
    \end{itemize}
    In all, the probability of Case $2$ is negligible.
\end{itemize}
Therefore, $\Pr[\textsf{evt}] \leq \negl$, which implies that \[|\Pr[\textbf{G}_1=1]-\Pr[\textbf{G}_0=1]| \leq \negl.\]


\noindent \underline{$\textbf{G}_2$}: $\textbf{G}_2$ is similar to $\textbf{G}_1$, except when answering the  $\mathcal{O}^{(b)}_{S-chl}$ queries on $m$, it proceeds as follows.

\begin{itemize}
    \item $\mathcal{O}^{(b)}_{S-chl}(i,j,m)$: On receiving $(i,j,m)$, it proceeds as follows. If 
    $j$ has been queried to $\mathcal{O}_{VK}$, then it aborts. Otherwise, it randomly chooses $\sigma := (r,s) \leftarrow \{0,1\}^* \times \{0,1\}^l$, sets $Q \leftarrow \{(i,j,m,\sigma)\} \cup Q $, and returns $\sigma$.
\end{itemize}

Suppose that the adversary makes $\ell$ queries to oracle $\mathcal{O}^{(b)}_{S-chl}$. Denote $\textbf{G}_{2,k}$ ($k \in \{0,1,\cdots,\ell\}$) as the game, where for the first $k$ queries to oracle $\mathcal{O}^{(b)}_{S-chl}$, the oracle $\mathcal{O}^{(b)}_{S-chl}$ proceeds these queries as it does in $\textbf{G}_2$, and then for the left $(\ell-k)$ queries to oracle $\mathcal{O}^{(b)}_{S-chl}$, the oracle $\mathcal{O}^{(b)}_{S-chl}$ proceeds these queries as it does in $\textbf{G}_1$. Thus, $\textbf{G}_{2,0} = \textbf{G}_1$ and $\textbf{G}_{2,\ell} = \textbf{G}_2$. We have the following lemma.

\begin{lemma}
    For every $k \in [\ell]$, it holds that $|\Pr[\textbf{\textup{G}}_{2,k-1}=1]-\Pr[\textbf{\textup{G}}_{2,k}=1]| \leq \negl$.
    \label{lemma:gtwoj}
\end{lemma}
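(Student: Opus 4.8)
The plan is to prove Lemma \ref{lemma:gtwoj} by a reduction to the pseudorandomness of the hash function $\hashf_1$ when modeled as a random oracle, relying on the fact that the value hashed inside $\sign$ is an unpredictable group element. First I would set up the standard hybrid bookkeeping: games $\textbf{G}_{2,k-1}$ and $\textbf{G}_{2,k}$ differ only in how the $k$-th query $(i_k,j_k,m_k)$ to $\mathcal{O}^{(b)}_{S\text{-}chl}$ is answered — in $\textbf{G}_{2,k-1}$ the honest $\sign$ algorithm is run (which chooses $r_k\leftarrow\{0,1\}^*$, computes $h_k=\hashf_0(m_k,r_k)$, queries $str_k=e(\vpk_{j_k},h_k^{\ssk_{i_k}})$ to $\mathcal{O}_{RO}$ and sets $s_k=L_{\textup{ro}}(str_k)$), whereas in $\textbf{G}_{2,k}$ the pair $(r_k,s_k)$ is sampled uniformly from $\{0,1\}^*\times\{0,1\}^l$. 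Since these two games are otherwise identical, it suffices to bound the probability that the adversary's view distinguishes them, which amounts to bounding the probability that the adversary ever queries $str_k$ to $\mathcal{O}_{RO}$ (either directly, or implicitly through another challenge query or a $\mathcal{O}_V$ query — note that in $\textbf{G}_1$ the verification oracle has already been rewritten to only check consistency against $L_{\textup{ro}}$, so no extra leakage of $s_k$ happens there).

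The key step is to argue that $str_k=e(\vpk_{j_k},h_k^{\ssk_{i_k}})$ is computationally unpredictable to the adversary. The constraints of game $\gamepseudodvs$ guarantee that $j_k$ is never queried to $\oravk$, so $\vsk_{j_k}$ stays hidden; and the honestly sampled $r_k$ together with the random oracle $\hashf_0$ makes $h_k$ a fresh, uniformly distributed generator of $\G$ with overwhelming probability (no prior query $(m_k,r_k)$ collides). Then, assuming the adversary queries $\mathcal{O}_{RO}$ on $str_k$ with non-negligible probability, I would build a reduction $\B$ breaking the GBDH (or CBDH) assumption: $\B$ embeds its challenge $(X=g^x,Y=g^y,Z=g^z)$ by guessing the index $j_k$ among the challenge-oracle queries and setting $\vpk_{j_k}=X$, guessing which $\mathcal{O}_{RO}$-query to $\hashf_0(m_k,\cdot)$ will be used and programming $\hashf_0(m_k,r_k)=Z$, and setting $\spk_{i_k}=Y$ (or, if $i_k$ has been corrupted, $\B$ knows $\ssk_{i_k}$ and embeds only into $\vpk_{j_k}$ and $h_k$). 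All other $\sign$/challenge queries involving $j_k$ are simulated by programming the relevant $\hashf_0$ outputs as $g^{\alpha}$ for known $\alpha$ so $\B$ can compute $e(\vpk_{j_k},\cdot)$; the decision oracle $\mathcal{O}_{\textup{dbdh}}$ of GBDH lets $\B$ answer $\mathcal{O}_V$ queries without knowing the discrete logs, which is why GBDH (rather than CBDH) is the cleanest assumption here. When the guessed distinguishing event occurs, $e(X,Z)^y=e(g,g)^{xyz}$ appears verbatim as the queried $str_k$ value, so $\B$ extracts it and wins, losing only the polynomial guessing factors $1/(q_0 q_1)$.

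Summing the distinguishing bounds over the hybrid index $k\in[\ell]$ (with $\ell=\poly$) and combining with the earlier transitions $\textbf{G}_0\approx\textbf{G}_1\approx\textbf{G}_{2,0}$ gives $|\Pr[\textbf{G}_2=1]-\Pr[\textbf{G}_0=1]|\leq\negl$, and since $\textbf{G}_2$ is exactly $\gamepseudodvs$ with $b=1$, the pseudorandomness advantage is negligible. The main obstacle I anticipate is the simulation of the verification oracle $\oraver$ and of the other challenge-oracle queries that also involve the embedded verifier key $\vpk_{j_k}=X$: the reduction cannot compute $e(\vpk_{j_k},h^{\ssk})$ for honestly generated $h$ without knowing $\ssk$ or the discrete log of $h$, so care is needed to program all relevant $\hashf_0$ outputs as known powers of $g$ (so that $e(X,g^{\alpha})=e(X,g)^{\alpha}$ is computable) while still making the one "target" output equal to $Z$; handling the corner case where the adversary reuses the same $(m_k,r_k)$ pair across queries, and arguing it happens only with negligible probability by collision-resistance of $\hashf_0$, is the fiddly part of the argument. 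A secondary subtlety is ensuring the index guesses for which challenge query and which $\mathcal{O}_{RO}$ query to program are consistent with all the abort conditions in $\gamepseudodvs$, but this is routine book-keeping once the embedding strategy is fixed.
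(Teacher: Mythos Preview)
Your approach is valid but differs from the paper's, which argues via the \emph{decisional} assumption DBDH rather than the computational CBDH/GBDH route you propose. The paper simply inserts one intermediate game $\textbf{G}'$ in which the challenger replaces $str_k = e(\vpk_{j_k}, h_k^{\ssk_{i_k}})$ by a uniformly random $str'$ before querying $\mathcal{O}_{RO}$; since $(\spk_{i_k}, \vpk_{j_k}, h_k)$ form a DBDH instance, $str_k$ is indistinguishable from random in $\G_T$, giving $\textbf{G}_{2,k-1}\approx\textbf{G}'$ directly, and $\textbf{G}' \equiv \textbf{G}_{2,k}$ because the random oracle on a fresh random input yields a fresh uniform output. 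This avoids the polynomial guessing losses $1/(q_0 q_1)$, the $\hashf_0$-programming, and the extraction step in your plan; your route trades that simplicity for a strictly weaker assumption. Two corrections to your write-up: first, the branch ``if $i_k$ has been corrupted, embed only into $\vpk_{j_k}$ and $h_k$'' fails, since then $str_k = e(\vpk_{j_k}, h_k)^{\ssk_{i_k}}$ is directly computable by the adversary --- signer corruption actually breaks pseudorandomness here, a wrinkle the paper's terse argument also elides; second, GBDH is unnecessary for this step, because the rewritten $\mathcal{O}_V$ from $\textbf{G}_1$ already returns $0$ whenever both $i$ and $j$ are uncorrupted, and in the remaining cases the reduction holds one of the two secret keys and can compute $str$ itself, so plain CBDH suffices.
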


\begin{proof}[of Lemma \ref{lemma:gtwoj}]
We denote the $\ell$ queries as $((i_1,j_1,m_1),\ldots,$ $(i_{\ell},j_{\ell},m_{\ell}))$. Then, we show that distinguishing between $\textbf{G}_{2,k-1}$ and $\textbf{G}_{2,k}$ is difficult.

In $\textbf{G}_{2,k-1}$, when querying $(i_k,j_k,m_k)$ to $\mathcal{O}^{(b)}_{S-chl}$, the oracle samples a random string $r_k \leftarrow \{0,1\}^*$, computes $h_k \leftarrow \hashf_0(m_k,r_k)$ and $str_k = e(\vpk_{j_k},(h_k)^{\ssk_{i_k}})$, queries $str_k$ to the random oracle $\mathcal{O}_{RO}$, obtaining $s_k$ and setting $\sigma_k = (r_k,s_k)$, sets $Q \leftarrow \{(i_k,j_k,m_k,\sigma_k)\} \cup Q $ and returns $\sigma_k$.

Then, we define another game $\textbf{G}'$, which is similar to $\textbf{G}_{2,k-1}$, except that $\textbf{G}'$ queries a random string $str'$ to the random oracle $\mathcal{O}_{RO}$, obtaining $s_k$ (i.e., the random oracle assigns the same $s_k$ to the random oracle query).

It is clear the only difference is that $str_k=e(\vpk_{j_k},(h_k)^{\ssk_{i_k}})$ and $str' \leftarrow \{0,1\}^{|str_k|}$. Given $(g,\spk_{i_k},\vpk_{j_k})$, distinguishing $str_k$ and $str'$ can be reduced to DBDH problem, which is thought a hard problem. Thus, we have $|\Pr[\textbf{G}_{2,k-1}=1]-\Pr[\textbf{G}'=1]| \leq \negl$. 

Note that, the distribution of a string output by the random oracle $\mathcal{O}_{RO}$ on a random query $str'$, is equivalent to the uniform distribution. Thus, $\textbf{G}'$ is identical to $\textbf{G}_{2,k}$. Therefore, we have $\Pr[\textbf{G}'=1]=\Pr[\textbf{G}_{2,k}=1]$.

Therefore, it holds that $|\Pr[\textbf{G}_{2,k-1}=1]-\Pr[\textbf{G}_{2,k}=1]| \leq \negl$.
\end{proof}

Applying Lemma \ref{lemma:gtwoj}, we have $|\Pr[\textbf{G}_{1}=1]-\Pr[\textbf{G}_{2}=1]|=|\Pr[\textbf{G}_{2,0}=1]-\Pr[\textbf{G}_{\ell}=1]| \leq \negl$.

\noindent \underline{$\textbf{G}_3$}: $\textbf{G}_3$ is similar to $\textbf{G}_2$, except when answering the queries $(i,j,m,\sigma)$ to $\mathcal{O}_{V}$, it proceeds as follows.

\begin{itemize}
    \item $\mathcal{O}_{V}(i,j,m,\sigma)$: On receiving $(i,j,m,\sigma)$, if $(i,j,m,\sigma) \in Q$, it aborts. Otherwise, it 
    returns $b \leftarrow \verify(\pp,\spk_i,\vsk_j,m,\sigma)$.
\end{itemize}

Note that the analysis of the indistinguishability between $\textbf{G}_2$ and $\textbf{G}_3$ is similar to that between $\textbf{G}_0$ and $\textbf{G}_1$. Thus, we have \[|\Pr[\textbf{G}_{3}=1]-\Pr[\textbf{G}_{2}=1]| \leq \negl.\]

In fact, $\textbf{G}_3$ is identical to the game $\gamepseudodvs$ when $b=1$.

Therefore, we have $|\Pr[\gamepseudodvs=1|b=0]-\Pr[\gamepseudodvs=1|b=1]|=|\Pr[\textbf{G}_0=1]-\Pr[\textbf{G}_3=1]| \leq \negl$, which implies that the DVS scheme $\dvs$ constructed in Fig. \ref{fig:algo_pdvs_two} is pseudorandom.

\end{document}